\documentclass[11pt]{extarticle}
\usepackage[a4paper,
            left=2cm,
            right=2cm,
            top=2cm,
            bottom=2cm
           ]{geometry}

\usepackage{comment}
\usepackage{siunitx}
\usepackage{relsize}
\usepackage{ifthen}
\usepackage{caption}

\usepackage{color}
\usepackage[T1]{fontenc}
\usepackage{psfrag}
\usepackage{booktabs}
\usepackage{mathtools}
\usepackage{amsthm}
\usepackage{xstring}
\usepackage{multirow}
\usepackage{xcolor}
\usepackage{prettyref}
\usepackage{flexisym}
\usepackage{bigdelim}
\usepackage{breqn} 
\usepackage{listings}

\usepackage{enumitem}
\usepackage{xspace}
\usepackage{bm}
\graphicspath{{./figures/}}
\usepackage{tikz}
\usetikzlibrary{matrix,calc}
\usepackage{amssymb}  
\usepackage{mdframed}
\usepackage[vlined,ruled,linesnumbered]{algorithm2e}

\newtheorem{theorem}{Theorem}

\newtheorem{lemma}[theorem]{Lemma}
\newtheorem{definition}{Definition}
\newtheorem{proposition}[theorem]{Proposition}

\newcommand{\bdmath}{\begin{dmath}}
\newcommand{\edmath}{\end{dmath}}
\newcommand{\beq}{\begin{equation}}
\newcommand{\eeq}{\end{equation}}
\newcommand{\bdm}{\begin{displaymath}}
\newcommand{\edm}{\end{displaymath}}
\newcommand{\bea}{\begin{eqnarray}}
\newcommand{\eea}{\end{eqnarray}}
\newcommand{\beal}{\beq \begin{array}{ll}}
\newcommand{\eeal}{\end{array} \eeq}
\newcommand{\beas}{\begin{eqnarray*}}
\newcommand{\eeas}{\end{eqnarray*}}
\newcommand{\ba}{\begin{array}}
\newcommand{\ea}{\end{array}}
\newcommand{\bit}{\begin{itemize}}
\newcommand{\eit}{\end{itemize}}
\newcommand{\ben}{\begin{enumerate}}
\newcommand{\een}{\end{enumerate}}

\newcommand{\ie}{\emph{i.e.,}\xspace}

\newcommand{\hide}[1]{}

\newcommand{\hiddenText}{{\color{gray} hidden text.}}
\newcommand{\hideWithText}[1]{\hiddenText}

\newcommand{\tran}{^{\mathsf{T}}}

\newcommand{\Real}[1]{ { {\mathbb R}^{#1} } }
\newcommand{\Q}[1]{ { {\mathbb Q}^{#1} } }
\newcommand{\Z}[1]{ { {\mathbb Z}^{#1} } }

\newcommand{\bmat}{\left[ \begin{array}}
    \newcommand{\emat}{\end{array}\right]}
\newcommand{\blue}[1]{{\color{blue}#1}}

\newcommand{\linkToPdf}[1]{\href{#1}{\blue{(pdf)}}}
\newcommand{\linkToPpt}[1]{\href{#1}{\blue{(ppt)}}}
\newcommand{\linkToCode}[1]{\href{#1}{\blue{(code)}}}
\newcommand{\linkToWeb}[1]{\href{#1}{\blue{(web)}}}
\newcommand{\linkToVideo}[1]{\href{#1}{\blue{(video)}}}
\newcommand{\linkToMedia}[1]{\href{#1}{\blue{(media)}}}
\newcommand{\award}[1]{\xspace} 

\usepackage{scalerel,stackengine}
\newcommand\pig[1]{\mathlarger{#1}}
\newcommand\pigl[1]{\mathopen{\pig{#1}}}
\newcommand\pigr[1]{\mathclose{\pig{#1}}}

\newcommand{\lovasz}{Lov\a' asz\space}
\newcommand{\tr}{\operatorname{Tr}}
\newcommand{\symm}{\mathbb{S}^n}

\newcommand{\ip}[2]{\left\langle #1,\,#2 \right\rangle}

\usepackage{graphicx} 
\usepackage[ruled,vlined]{algorithm2e}
\usepackage{hyperref}
\usepackage{tikz}
\usetikzlibrary{graphs, graphs.standard, matrix}

\title{A Polynomial-Time Algorithm for Coloring Perfect Graphs\\ Based on Walk Counting}
\author{Amir Ali Ahmadi\thanks{Princeton University, Operations Research and Financial Engineering. Partially supported by the Princeton AI Lab Seed Grant, the Princeton SEAS Innovation Grant, and a
Research Gift in Mathematical Optimization. Email: \texttt{aaa@princeton.edu, yt3846@princeton.edu}},
\and 
Pravesh K. Kothari\thanks{Princeton University, Computer Science. Partially supported by the Princeton AI Lab Seed Grant and the Princeton SEAS Innovation Grant. Email: \texttt{kothari@cs.princeton.edu}},
\and 
Yukai Tang\footnotemark[1]
}

\begin{document}

\maketitle

\begin{abstract}

We present a polynomial-time algorithm for optimally coloring perfect graphs that is based entirely on graph-theoretic operations. At its core, the algorithm decides whether a perfect graph contains a clique of a given size by iteratively counting walks in the graph with certain weights assigned to its edges and nonedges. These weights are initialized according to a uniform scheme and then updated in each iteration based on the walk counts from the previous iteration. 

\end{abstract}

\section{Introduction}

In a simple graph $G(V,E)$, with vertex set $V$ and edge set $E$, a \emph{stable set} is a set of pairwise non-adjacent vertices and a \emph{clique} is a set of pairwise adjacent vertices. A \emph{coloring} of $G$ is an assignment of colors to the vertices of $G$ such that no two adjacent vertices share the same color. This can also be viewed as partitioning $V$ into stable sets. We denote the size of a maximum stable set of $G$ by $\alpha(G)$, the size of a maximum clique of $G$ by $\omega(G)$, and the minimum number of colors needed to color $G$ by $\chi(G)$. 

In this paper, we are concerned with finding a maximum stable set, a maximum clique, and a minimum coloring of a \emph{perfect graph}. A graph $G$ is called \emph{perfect} if for every induced subgraph\footnote{A graph \(H\) is an \emph{induced subgraph} of a graph \(G\) if \(V(H)\subseteq V(G)\) and any two vertices of \(H\) are adjacent if and only if they are adjacent in \(G\).} $H$ of $G$, the chromatic number $\chi(H)$ is equal to the clique number $\omega(H)$. This is a highly studied family of graphs both in structural graph theory and in combinatorial optimization.  One of the most significant results regarding perfect graphs is the strong perfect graph theorem conjectured by Berge~\cite{berge1961farbung} and proven many years later by Chudnovsky, Robertson, Seymour, and Thomas~\cite{chudnovsky2006strong}. This theorem states that a graph $G$ is perfect if and only if no induced subgraph of $G$ is an odd cycle of length at least five or the complement of one. Among other implications, this theorem has led to a polynomial-time algorithm for recognizing perfect graphs~\cite{chudnovsky2005recognizing}.




It is well known that the problems of finding a maximum stable set, a maximum clique, and a minimum coloring of a general graph are all NP-hard~\cite{karp2009reducibility}. In an influential paper, Gr\"otschel, Lov\a' asz, and Schrijver proved that these problems can be solved in polynomial time for perfect graphs~\cite{Grotschel1994poly-alg-perfect-graphs}. Their approach, however, relies on the ellipsoid method and is not regarded as combinatorial.
Designing a combinatorial polynomial-time algorithm for finding a maximum stable set/clique or a minimum coloring of a perfect graph is a well-known open problem in graph theory; see, e.g., an interview with Lov\a' asz~\cite{lovasz2011interview}, an interview with Chudnovsky~\cite{Mansour2021ChudnovskyInterview},~\cite[Chapter 9]{grotschel2012geometric-alg-combo-opt}, \cite[Section 12]{trotignon2013perfect-graphs-survey},~\cite[Section 1]{chudnovsky2020maximum}, or~\cite[Section 1]{abrishami2025submodular}. Over the years, such algorithms have been designed for several subsets of perfect graphs, including interval graphs~\cite{olariu1991optimal}, chordal graphs~\cite{gavril1972algorithms}, claw-free perfect graphs~\cite{hsu1981color}, bull-free perfect graphs~\cite{penev2012coloring}, perfect graphs that do not have a balanced skew partition~\cite{chudnovsky2015coloring}, and bounded degree perfect graphs with no prism or hole of length four as induced subgraphs~\cite{abrishami2025submodular}.
%
%
%
%
%


In this paper, we present a new polynomial-time algorithm for finding a maximum stable set, a maximum clique, and a minimum coloring in any perfect graph. At its core, our algorithm decides if a perfect graph contains a clique of a given size. This procedure relies solely on counting walks in the graph with certain weights assigned to its edges and nonedges. These weights are initially set in a uniform fashion, and then updated in each iteration using the walk counts from the previous iteration.


Despite the prominence of the open problem on finding a polynomial-time combinatorial algorithm to optimally color perfect graphs, there is surprisingly no universally accepted definition of a combinatorial algorithm~\cite{trotignon2013perfect-graphs-survey, chudnovsky2020maximum,abrishami2025submodular,eisenbrand2003combo-max-stable-t-perfect,schrijver2000combinatorial,IWATA2002203}. 
Roughly speaking, a `combinatorial algorithm' should be an algorithm that can be described as a sequence of operations  applied to vertices and edges of graphs. 
In the absence of a formal definition of the notion of a combinatorial algorithm, it is hard for us to genuinely claim that we have solved the open problem. We believe that walk counting is a basic combinatorial operation and hence our algorithm should be considered combinatorial. We hope that either our work will be accepted by the community as resolving this problem, or that it will lead to a formal definition of the notion of a combinatorial algorithm that excludes our algorithm. Either outcome would constitute meaningful progress on this longstanding question.

\subsection{Organization and contributions of the paper}

%

{

The remainder of this paper is organized as follows. Section~\ref{sec:reduction-color-clique-decision} details a {polynomial-time} reduction from {the problem of optimally coloring a perfect graph to that of deciding if a perfect graph has a clique of a given size.}
%
%
In Section~\ref{sec:certificate-for-refuting-clique-of-size-k}, {we prove that one can always give a certificate of nonexistence of a clique of a given size in a perfect graph in terms of an inequality involving the weights of closed walks of a certain length (Theorem~\ref{thm:inexact-certificate-for-refuting-clique-of-size-k}). In Section~\ref{sec:alg-analysis}, we present our walk counting algorithm for deciding if a perfect graph has a clique of a given size. The correctness of this algorithm is established by a Lyapunov function argument (Section~\ref{sec:proof_of_correctness}). {Finally, in Section~\ref{sec:connection}, we show that our algorithm can be viewed as a combinatorial (approximate) implementation of a multiplicative weights update algorithm for solving a modified version of the semidefinite program associated with the \lovasz theta function, or a combinatorial (approximate) implementation of the exponentiated gradient descent algorithm applied to the dual of this semidefinite program. }


}

\subsection{Notation}\label{sec:notations}
For a graph $G$, we denote its complement by $\bar{G}$.
We denote the set of $n\times n$ real symmetric matrices by $\symm$.
A matrix $A\in \symm$ is said to be positive semidefinite (denoted by $A\succeq 0$) if all its eigenvalues are nonnegative. We denote the identity matrix by $I$, the all-ones matrix by $J$, the maximum eigenvalue of a matrix $A\in\symm$ by $\lambda_\max(A)$, its minimum eigenvalue by $\lambda_\min(A)$, and its trace by $\tr(A)$. 

For a vector $x\in \Real{n}$, we denote its transpose by $x\tran$ and its $i$-th entry by $x_i$. The $i$-th standard basis vector in $\Real{n}$ is denoted by $e_i$. We denote the probability simplex in $\Real{n}$ by $\Delta_n$; any element of $\Delta_n$ is called a probability vector. We use the notation $Y_{ij}$ to denote the symmetric matrix that has ones in entries $(i,j)$ and $(j,i)$, and zeros elsewhere. We denote the indicator function of a logical condition $x$ by $\mathbf{1}(x)$; i.e., $\mathbf{1}(x) = 1$ if $x$ is true and $\mathbf{1}(x) = 0$ otherwise.


%

{
}

\section{Reducing minimum coloring of a perfect graph to clique decision}\label{sec:reduction-color-clique-decision}

In this section, we give a polynomial-time combinatorial reduction from the problem of finding a minimum coloring of a perfect graph to that of deciding if the graph has a clique of a given size. While the main part of this reduction appears in slightly different forms in the literature~\cite{grotschel2012geometric-alg-combo-opt,trotignon2013perfect-graphs-survey,laurent2012semidefinite-lecture-notes}, we give a self-contained presentation here with some details filled in for completeness and the benefit of the reader.


Suppose we have an oracle that, given a perfect graph $G(V,E)$ and an integer $k\in\{2,\dots,|V|\}$, determines whether there is a clique of size $k$ in $G$. In the analysis below, we will refer to this oracle as the clique decision oracle.
The reduction is as follows. We first show how to find a maximum clique in a perfect graph using the clique decision oracle (Algorithm~\ref{alg:blackbox2}).\footnote{This procedure works more generally on all graphs if given access to a clique decision oracle for an arbitrary graph.} Then, we show how Algorithm~\ref{alg:blackbox2} can be used to find a stable set that intersects a given collection of maximum cliques in a perfect graph (Algorithm~\ref{alg:blackbox1}). Finally, we show how Algorithm~\ref{alg:blackbox2} and Algorithm~\ref{alg:blackbox1} can be combined to color a perfect graph (Algorithm~\ref{alg:find-critical-stable-set}).

\subsection{Searching for a maximum clique using the clique decision oracle}

In this section, we show how to find a maximum clique in a perfect graph using the clique decision oracle. The procedure is outlined in Algorithm~\ref{alg:blackbox2}. 


We begin by computing the clique number $\omega(G)$ by binary search over $k \in \{2,\dots,|V|\}$ using the clique decision oracle. We then fix an ordering of the vertices of $G$ and maintain a working graph $H$, initially equal to $G$. We process the vertices in the chosen order. For every vertex $v$, we use the clique decision oracle to decide whether the (still perfect) graph $H \setminus \{v\}$ contains a clique of size $\omega(G)$. If the answer is yes, then there exists a maximum clique of $H$ that does not contain $v$, so $v$ can be deleted safely. If the answer is no, then every clique of size $\omega(G)$ in $H$ contains $v$, so $v$ must be kept. Since a vertex is deleted only when the clique number remains $\omega(G)$, the current working graph always has clique number $\omega(G)$. At the end of the procedure, every remaining vertex belongs to every maximum clique of the final working graph $H$. Consequently, $H$ itself is a maximum clique of $G$.

\vspace{2mm}
\begin{algorithm}[H]
\caption{Finding a maximum clique using a clique decision oracle}\label{alg:blackbox2}
\SetAlgoLined
\KwIn{A perfect graph $G(V,E)$}
\KwOut{A maximum clique of $G$}

Compute $\omega(G)$ using binary search and the clique decision oracle\;
$H \leftarrow G$\;
Fix any ordering $v_1,\dots,v_{|V|}$ of $V$\;

\For{$i=1$ \KwTo $|V|$}{
    \If{$H\setminus\{v_i\}$ contains a clique of size $\omega(G)$}{
        $H \leftarrow H\setminus\{v_i\}$\;
    }
}

\Return the vertex set of $H$
\end{algorithm}



\subsection{Finding a stable set that intersects a given set of maximum cliques}

In this section, we show how to find a stable set in a perfect graph $G$ that intersects a given collection of maximum cliques $\mathcal{C} = \{C_1, C_2, \ldots, C_m\}$  of $G$. This is done as given in Algorithm~\ref{alg:blackbox1}. 

\vspace{2mm}
\begin{algorithm}[H]
\caption{Finding a stable set that intersects every maximum clique in a given set}\label{alg:blackbox1}
\SetAlgoLined
\KwIn{A perfect graph $G=(V,E)$, a collection of maximum cliques $\mathcal{C}=\{C_1,\dots,C_m\}$ of $G$}
\KwOut{A stable set $S$ with $S\cap C_i\neq\emptyset$ for all $i$}

\ForEach{$v\in V$}{
    $w(v)\leftarrow |\{\,i\mid v\in C_i\}|$\;
}

Construct an auxiliary graph $G'=(V',E')$ as follows: \newline
$V'$: for every vertex $v\in V$, delete $v$ and, if $w(v)>0$, add $w(v)$ copies
$\{v^{(1)},\dots,v^{(w(v))}\}$ to $V'$; \newline
$E'$: for every edge $uv\in E$, make every copy of $u$ in $V'$ adjacent to every copy of $v$ in $V'$\;

$S' \leftarrow \text{Algorithm~\ref{alg:blackbox2}}(\bar{G'})$\;
$S \leftarrow \{\,v\in V \mid \exists j \text{ such that } v^{(j)}\in S'\,\}$\;

\Return $S$
\end{algorithm}
\vspace{2mm}

The proof of correctness of Algorithm~\ref{alg:blackbox1} is given in Proposition~\ref{prop:blackbox1-correctness} and relies on the two lemmas below.

 \begin{lemma}\label{lem:stable-set-touch-all-cliques-exist}
Let $G$ be a perfect graph. Then there exists a stable set $S$ in $G$ such that $S$ intersects every maximum clique of $G$.
\end{lemma}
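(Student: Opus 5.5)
The plan is to use the definition of perfection directly: $\chi(G)=\omega(G)$, applied to $G$ itself, which is an induced subgraph of itself. We may assume $G$ has at least one vertex, so $\omega=\omega(G)\ge 1$. Fix an optimal coloring of $G$ with exactly $\omega$ colors, and view it as a partition of $V$ into stable sets $S_1,\dots,S_\omega$ (the color classes).

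The key observation concerns any maximum clique $C$ of $G$. Its vertices are pairwise adjacent, so they must receive pairwise distinct colors. Since $|C|=\omega$ and only $\omega$ colors are available, $C$ uses every color. In particular $C$ meets every color class $S_j$; that is, $C\cap S_j\neq\emptyset$ for each $j\in\{1,\dots,\omega\}$.

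The proof then concludes by taking $S=S_1$ (any single color class works). It is stable because it is a color class. By the observation, it intersects every maximum clique of $G$.

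There is essentially no obstacle here. The only point needing care is the pigeonhole step: a clique of size $\omega$, colored with $\omega$ colors and no repeats, must hit all $\omega$ colors. Beyond that, we need only the fact that perfection gives an $\omega(G)$-coloring of $G$ itself, and this is immediate from the definition with $H=G$. The statement is an existence result, so no algorithmic content is needed here. The algorithmic content, namely finding such an $S$ for a given collection $\mathcal{C}$, is deferred to the analysis of Algorithm~\ref{alg:blackbox1}.
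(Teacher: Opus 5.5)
Your proof is correct and rests on the same key fact as the paper: a color class of a minimum coloring meets every maximum clique. You establish it directly by pigeonhole, whereas the paper argues by contradiction, deducing $\omega(G)\le\omega(G\setminus S_1)=\chi(G\setminus S_1)\le\chi(G)-1$ via perfection of the induced subgraph $G\setminus S_1$. Your direct version has the small advantage of making plain that only $\chi(G)=\omega(G)$ is needed; the paper's appeal to perfection of $G\setminus S_1$ could equally be replaced by the universal bound $\omega\le\chi$.
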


\begin{proof}
Suppose for the sake of contradiction that no stable set in $G$ intersects every maximum clique. Let
\(
S_1,S_2,\dots,S_{\chi(G)}
\)
be the color classes of a minimum coloring of $G$. Since $S_1$ is a stable set and, by assumption, does not intersect every maximum clique, there exists a maximum clique $K$ of $G$ such that
\(
K\cap S_1=\emptyset.
\)
It follows that \(K\subseteq G\setminus S_1\), and hence
\(
\omega(G\setminus S_1)\ge \omega(G).
\) Meanwhile, the remaining color classes
\(
S_2,\dots,S_{\chi(G)}
\)
form a valid coloring of \(G\setminus S_1\). Therefore,
\(
\chi(G\setminus S_1)\le \chi(G)-1.
\)

Since \(G\) is perfect, every induced subgraph of \(G\) is perfect; in particular, \(G\setminus S_1\) is perfect. Thus
\(
\chi(G\setminus S_1)=\omega(G\setminus S_1).
\)
Combining this with the inequalities above yields
\[
\omega(G)\le \omega(G\setminus S_1)=\chi(G\setminus S_1)\le \chi(G)-1,
\]
which contradicts the perfectness of \(G\). 
\end{proof}

 \begin{definition}[False twin copy]
Let $G=(V,E)$ be a graph, and let $v\in V$. A vertex $v'$ added to $G$ is called a \emph{false twin copy} of $v$ if $v'$ is adjacent to exactly the neighbors of $v$ and is not adjacent to $v$.
\end{definition}

\begin{lemma}\label{lem:perfectness-maintained}
    For any perfect graph $G$, the auxiliary graph $G'$ constructed in Algorithm~\ref{alg:blackbox1} is still perfect.
\end{lemma}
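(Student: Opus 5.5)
The graph $G'$ is obtained from $G$ in two stages. First we delete every vertex $v$ with $w(v)=0$, which produces an induced subgraph of $G$. Then we replace each remaining vertex $v$ by $w(v)$ pairwise non-adjacent copies, all with the same neighborhood. Equivalently, we keep $v^{(1)}=v$ and add $w(v)-1$ false twin copies of $v$, one at a time. Deleting vertices preserves perfectness, because every induced subgraph of a perfect graph is perfect. By induction on the number of added copies, it therefore suffices to prove one claim: if $G$ is perfect and $G^+$ is obtained by adding a single false twin copy $v'$ of a vertex $v$, then $G^+$ is perfect. Each intermediate graph in the construction is of exactly this form, since a copy of $v$ added later is a false twin of $v$ in the current graph. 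Copies are pairwise non-adjacent and share all neighbors of $v$, and those neighbors may themselves have been copied already, which is consistent.

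To prove the claim, let $H$ be an induced subgraph of $G^+$; we must show $\chi(H)=\omega(H)$. If $H$ does not contain both $v$ and $v'$, then $H$ is isomorphic to an induced subgraph of $G$: replace $v'$ by $v$ if needed, which is valid because $v'$ has exactly the neighbors of $v$ in $G$. In this case $\chi(H)=\omega(H)$ by the perfectness of $G$.

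Now suppose $H$ contains both $v$ and $v'$. Let $H_0=H\setminus\{v'\}$, which is an induced subgraph of $G$ and hence satisfies $\chi(H_0)=\omega(H_0)$. Since $v$ and $v'$ are non-adjacent, no clique of $H$ contains both. Any clique containing $v'$ becomes a clique of $H_0$ after swapping $v'$ for $v$. Hence $\omega(H)=\omega(H_0)$. For coloring, take an optimal coloring of $H_0$ and give $v'$ the color of $v$. This is proper, because the neighbors of $v'$ in $H$ are exactly the neighbors of $v$ in $H_0$, and $v'$ is not adjacent to $v$. Thus $\chi(H)\le\chi(H_0)=\omega(H_0)=\omega(H)\le\chi(H)$, which gives equality.

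There is no real obstacle here. The only point needing care is checking that the multi-copy construction, including copies of adjacent vertices, decomposes into a sequence of single false-twin additions on top of an induced subgraph. The ordering argument above handles this.
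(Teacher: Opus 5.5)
Your proof is correct, but it takes a different route from the paper. The paper gives no argument of its own. It only remarks that the lemma follows immediately from Theorem~1 of Lov\'asz's 1972 paper, a general result on substituting or replicating vertices in perfect graphs; replacing $v$ by $w(v)$ pairwise nonadjacent copies is a special case of that result.

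You instead prove exactly the special case that is needed. First you pass to the induced subgraph on $\{v : w(v)>0\}$. Then you rebuild $G'$ one false twin at a time. You correctly check that each new copy $v^{(j)}$ is a false twin of $v^{(1)}$ in the current graph, and that each intermediate graph is the induced subgraph of $G'$ on the vertices present so far. The key step is also sound: adding a false twin $v'$ leaves the clique number unchanged, because $v$ and $v'$ are nonadjacent and $v'$ can be swapped for $v$ in any clique. Any optimal coloring then extends by giving $v'$ the color of $v$.

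The two approaches trade off differently. The citation is shorter and covers far more general operations, including the adjacent-copy case, which needs a genuine argument about a color class meeting every maximum clique. Your argument is elementary and self-contained because the nonadjacent-copy case is the easy one. It also fits the section's stated aim of giving a self-contained presentation of the reduction.
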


The proof of Lemma~\ref{lem:perfectness-maintained} follows immediately from Theorem~1 in~\cite{lovasz-substitute}. 

\begin{proposition}\label{prop:blackbox1-correctness}
Let $G=(V,E)$ be a perfect graph, and let $\mathcal{C}=\{C_1,C_2,\dots,C_m\}$ be a collection of maximum cliques in $G$. Then Algorithm~\ref{alg:blackbox1} returns a stable set $S$ in $G$ that intersects every maximum clique in $\mathcal{C}$.
\end{proposition}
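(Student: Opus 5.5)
The plan is to pin down the clique structure of $G'$, show that $\bar{G'}$ is perfect so that Algorithm~\ref{alg:blackbox2} applies, and then project the maximum clique it returns back to $G$.

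\textbf{Step 1: $\bar{G'}$ is a valid input.} By Lemma~\ref{lem:perfectness-maintained}, $G'$ is perfect. By the weak perfect graph theorem (Lov\'asz), its complement $\bar{G'}$ is also perfect. Hence Algorithm~\ref{alg:blackbox2} may be run on $\bar{G'}$ and returns a maximum clique $S'$ of $\bar{G'}$, i.e., a maximum stable set of $G'$.

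\textbf{Step 2: $S$ is stable in $G$.} Copies of the same vertex are pairwise nonadjacent false twins. If $v^{(j)},u^{(l)}\in S'$ with $u\neq v$, then $uv\notin E$, since otherwise every copy of $u$ would be adjacent to every copy of $v$. So the projection $S$ is stable in $G$. The same observation shows that every clique of $G'$ uses at most one copy of each vertex and projects injectively onto a clique of $G$. In particular $\omega(G')\le\omega(G)$, and $V'$ is nonempty (as $m\ge1$), so $\omega(G')\ge 1$.

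\textbf{Step 3: $G'$ can be covered by $\omega$ stable sets, each of size $m$.} Let $\omega=\omega(G)$. For each $C_i$, pick a set $K_i$ of distinct copies so that every vertex $v$ receives a distinct index, one per clique containing it. Then $K_1,\dots,K_m$ partition $V'$ into $m$ cliques of size $\omega$, so $|V'|=m\omega$ and $\omega(G')=\omega$. Since $G'$ is perfect, $\chi(G')=\omega$. Each color class is stable and so meets each $K_i$ at most once, giving size at most $m$. Because the $\omega$ classes cover $m\omega$ vertices, each has size exactly $m$. Therefore $\alpha(G')\ge m$. On the other hand, any stable set meets each $K_i$ at most once, so $\alpha(G')\le m$. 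Thus $|S'|=m$, and $S'$ contains exactly one vertex of each $K_i$.

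\textbf{Step 4: conclusion.} If $v^{(j)}\in S'\cap K_i$, then $v\in C_i$, and $v\in S$. So $S$ meets every $C_i$.

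The main obstacle is Step 1, namely that Algorithm~\ref{alg:blackbox2} needs a perfect input, which requires the perfect graph theorem for complements. A secondary concern is that Algorithm~\ref{alg:blackbox2} relies on an oracle for $\bar{G'}$; this is fine once $\bar{G'}$ is known to be perfect. Step 3 only uses the counting argument together with perfectness of $G'$, so Lemma~\ref{lem:stable-set-touch-all-cliques-exist} is not needed directly.
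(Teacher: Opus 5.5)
Your proof is correct. Steps 1, 2 and 4 match the paper, but Step 3 takes a genuinely different route.

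The paper argues by contradiction in $G$. If $S$ missed some $C_j$, then $|S'|\le\sum_{v\in S}w(v)=\sum_i|S\cap C_i|\le m-1$. On the other hand, Lemma~\ref{lem:stable-set-touch-all-cliques-exist} supplies a stable set $T$ of $G$ meeting every $C_i$, and its full lift $T'$ (all copies of all $v\in T$) is a stable set of $G'$ of size $m$. This contradicts the maximality of $S'$.

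You instead work entirely inside $G'$. Partitioning $V'$ into the $m$ disjoint lifted cliques $K_i$ gives $\alpha(G')\le m$. An $\omega$-coloring of $G'$ plus pigeonhole gives $\alpha(G')\ge m$. Hence every maximum stable set of $G'$ is a transversal of $\{K_1,\dots,K_m\}$, and its projection meets every $C_i$. This is direct rather than by contradiction, and it dispenses with Lemma~\ref{lem:stable-set-touch-all-cliques-exist}. It also yields the sharper structural fact $\alpha(G')=m$ with $|S'\cap K_i|=1$ for every $i$.

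The price, as you wrote it, is a second substantive use of Lemma~\ref{lem:perfectness-maintained}, to get $\chi(G')=\omega$. The paper uses that lemma only to justify running Algorithm~\ref{alg:blackbox2} on $\bar{G'}$. This extra dependence is avoidable. Give each copy $v^{(j)}$ the color of $v$ in an $\omega$-coloring of $G$. Copies of one vertex are pairwise nonadjacent, so this is a proper coloring of $G'$. Thus $\chi(G')\le\omega$ follows from $\chi(G)=\omega(G)$ alone, and your bound $\omega(G')\le\omega(G)$ is then not needed for the pigeonhole step.
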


\begin{proof}
We first prove that the output $S$ of Algorithm~\ref{alg:blackbox1} is a stable set in $G$. By Lemma~\ref{lem:perfectness-maintained}, the auxiliary graph $G'$ constructed {in} Algorithm~\ref{alg:blackbox1} is perfect. Hence, its complement $\bar G'$ is also perfect. {Therefore, by the guarantee of Algorithm~\ref{alg:blackbox2}, the set $S'$ in line 5 of Algorithm~\ref{alg:blackbox1} is a maximum clique of $\bar G'$, i.e., a maximum stable set of $G'$.} {By the construction of the auxiliary graph $G'$, the set $S$ in line 6 of Algorithm~\ref{alg:blackbox1} has the following equivalent description:
\[
S=\{\,v\in V : \text{some false twin copy of } v \text{ belongs to } S'\,\}.
\]}We claim that $S$ is a stable set in $G$.  {Suppose for the sake of contradiction that  $u,v\in S$ are adjacent in $G$. Since $u,v\in S$, there exist copies \(u^{(a)}\) of \(u\) and \(v^{(b)}\) of \(v\) that both belong to \(S'\). However, since $u,v$ are adjacent in $G$, by the construction of $G'$, \(u^{(a)}\) is adjacent to  \(v^{(b)}\). This contradicts the fact that $S'$ is a stable set in $G'$.}

It remains to {show} that $S$ intersects every clique in $\mathcal{C}$. {Suppose for the sake of contradiction} that there exists some $C_j\in\mathcal{C}$ such that
\(
S\cap C_j=\emptyset.
\)
Recall that the weight $w(v)$ {in Algorithm~\ref{alg:blackbox1}} is defined as the number of cliques in $\mathcal{C}$ that contain $v$. Therefore,
\(
\sum_{v\in S} w(v)
=
\sum_{i=1}^m |S\cap C_i|.
\)
{Since $S$ is a stable set and each $C_i$ is a clique, $S$ intersects every $C_i$ at at most one vertex. Together with the fact that $S\cap C_j=\emptyset$, it follows that}
\(
\sum_{v\in S} w(v)\le m-1.
\)
{
Since \(S'\) contains only copies of vertices in \(S\) and there are exactly \(w(v)\) copies of each \(v\), we have
\[
|S'|\le \sum_{v\in S} w(v) \le m-1.
\]}{From Lemma~\ref{lem:stable-set-touch-all-cliques-exist}}, there exists a stable set $T$ in $G$ that intersects every $C_i$. {We construct a stable set $T'$ in $G'$ by taking all $w(v)$ false twin copies of $v$ for each vertex $v\in T$. Since false twin copies of the same vertex are pairwise nonadjacent and the vertices of $T$ are pairwise nonadjacent in $G$, $T'$ is a stable set in $G'$.} Moreover,
\[
|T'|=\sum_{v\in T} w(v)=m > |S'|.
\]
This contradicts the fact that $S'$ is a maximum stable set in $G'$, which shows that $S$ must intersect every clique in $\mathcal{C}$.

\end{proof}







\subsection{Reducing coloring a perfect graph to clique decision}

Algorithms~\ref{alg:blackbox2} and~\ref{alg:blackbox1} together lead to a reduction from coloring a perfect graph to deciding whether the graph contains a clique of a given size. Algorithm~\ref{alg:find-critical-stable-set} implements this reduction.

\begin{algorithm}
\caption{Optimally coloring a perfect graph}\label{alg:find-critical-stable-set}
\SetAlgoLined
\textbf{Input:} A perfect graph $G(V,E)$

    {\bf Output:} A minimum coloring of $G$

\While{$G$ not empty}
{
Call Algorithm~\ref{alg:blackbox2} to get a maximum clique $C$ in $G$ 

Initialize the collection of maximum cliques $\mathcal{C}$ in $G$:  $\mathcal{C}\gets \{C\}$

\While{True}
{
Call {Algorithm~\ref{alg:blackbox1}} to get a stable set $S$ that intersects every maximum clique in $\mathcal{C}$\\
\If {$\omega(G) > \omega(G\setminus S)$}{
    \textbf{Break}
}

Call {Algorithm~\ref{alg:blackbox2}} to get a maximum clique $C'$ in $G\setminus S$\\ 
$\mathcal{C} \gets \mathcal{C}\cup\{C'\}$\\
}

Color $S$ with a new color

$G \gets G\setminus S$\\ 
}
\end{algorithm}

We now prove the following proposition.

\begin{proposition}
    For every perfect graph $G(V,E)$, Algorithm~\ref{alg:find-critical-stable-set} gives an optimal coloring of $G$.\footnote{It follows that the number of iterations of the outer loop is $\chi(G)$.} Moreover, during any execution of the inner loop, the number of iterations is at most $|V|$.
\end{proposition}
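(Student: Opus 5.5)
Two things need proof: the coloring produced is optimal, and the inner loop runs at most $|V|$ times.

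\textbf{Optimality.} Each outer iteration removes a stable set $S$ with $\omega(G\setminus S)<\omega(G)$. Since $S$ is stable, it meets every clique in at most one vertex, so $\omega(G\setminus S)\ge \omega(G)-1$. Hence every outer iteration lowers the clique number by exactly one.

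The working graph is always an induced subgraph of the input, so it stays perfect. This is needed so that Algorithms~\ref{alg:blackbox2} and~\ref{alg:blackbox1} (via Proposition~\ref{prop:blackbox1-correctness}) behave as guaranteed. It also uses that each $C'$ found in $G\setminus S$ is a maximum clique of $G$, which is checked in the next paragraph.

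The outer loop therefore runs exactly $\omega(G)$ times before the graph becomes empty. Each removed set is stable and receives its own color, so we obtain a proper coloring with $\omega(G)$ colors. Since $\chi\ge\omega$ always holds, this coloring is optimal (and indeed $\chi(G)=\omega(G)$ by perfectness).

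\textbf{Bound on the inner loop.} Fix one execution of the inner loop on the current graph $G$. A new clique $C'$ is added only when the break test fails, i.e.\ when $\omega(G\setminus S)=\omega(G)$. In that case $C'$ has size $\omega(G)$, so it is a maximum clique of $G$, and $\mathcal{C}$ always consists of maximum cliques of $G$. Moreover $C'$ is disjoint from the current $S$, while $S$ meets every clique already in $\mathcal{C}$. So $C'$ is distinct from all of them.

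Distinctness alone only bounds the number of iterations by the number of maximum cliques, which can be exponential. The plan is a linear-algebra argument: show that the incidence vector $\chi^{C'}\in\Real{|V|}$ is linearly independent of the incidence vectors of the cliques already in $\mathcal{C}$.

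To do this, the stable set $S$ alone is not enough; one needs a vector that separates $C'$ from the earlier cliques. Let $u=\chi^{S}-\frac{1}{\omega(G)}\mathbf{1}$. For every $C\in\mathcal{C}$ we have $|S\cap C|=1$ and $|C|=\omega(G)$, so
\[
\langle u,\chi^C\rangle = 1-\frac{\omega(G)}{\omega(G)} = 0 .
\]
For the new clique, $|S\cap C'|=0$, so $\langle u,\chi^{C'}\rangle=-1\neq 0$. Hence $\chi^{C'}$ does not lie in the span of $\{\chi^C : C\in\mathcal{C}\}$.

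Consequently the incidence vectors in $\mathcal{C}$ remain linearly independent throughout the inner loop. Vectors in $\Real{|V|}$ can be independent in number at most $|V|$, so $|\mathcal{C}|\le |V|$. Each inner iteration except the last adds one clique to $\mathcal{C}$, which started with one clique. This bounds the number of iterations by $|V|$; the off-by-one bookkeeping (initial clique versus final breaking iteration) still needs to be checked against the count $|\mathcal{C}|\le|V|$.

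I expect the main obstacle to be exactly this independence step: finding the correct separating functional. The natural first attempt uses $\chi^{S}$ alone, which does not work, and the fix is the centering term $-\frac{1}{\omega(G)}\mathbf{1}$.
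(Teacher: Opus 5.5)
Your proof is correct and essentially matches the paper's. The paper bounds the inner loop with the nested affine subspaces $L^t=\{x : x(C_i)=1,\ i\le t\}$, using $\mathbf{1}_{S_t}\in L^t\setminus L^{t+1}$. Your separating vector $u=\chi^{S}-\frac{1}{\omega(G)}\mathbf{1}$ is the difference of the two points $\mathbf{1}_S$ and $\frac{1}{\omega(G)}\mathbf{1}$ of $L^t$, so your linear-independence argument is the same dimension count written linearly, and your optimality argument via $\chi\ge\omega$ is a slightly more direct form of the paper's. The off-by-one you left open closes at once: at the start of the $m$-th inner iteration $|\mathcal{C}|=m$, so linear independence in $\mathbb{R}^{|V|}$ gives $m\le|V|$.
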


\begin{proof}
    {We first prove that Algorithm~\ref{alg:find-critical-stable-set} gives an optimal coloring of $G$.} Fix one outer iteration of Algorithm~\ref{alg:find-critical-stable-set}, and let $H$ denote the current graph at the start of that outer iteration. Let $S$ be the stable set produced when the inner loop terminates.  {When the inner loop terminates},
\(
\omega(H \setminus S) < \omega(H).
\)
We claim that $S$ intersects every maximum clique of $H$. Indeed, if there is a maximum clique $Q$ of $H$ disjoint from $S$, then $Q \subseteq H \setminus S$, and hence
\(
\omega(H \setminus S) \ge |Q| = \omega(H),
\)
contradicting the strict inequality above. 

Since $S$ is a stable set, it contains at most one vertex from any clique of $H$. {Together with the fact that $S$ intersects every maximum clique of $H$,} every maximum clique of $H$ loses exactly one vertex when $S$ is deleted. It follows that
\(
\omega(H \setminus S) = \omega(H)-1.
\)
We also know that $H \setminus S$ is an induced subgraph of the perfect graph $H$, so it is also a perfect graph. Consequently,
\[
\chi(H \setminus S)=\omega(H \setminus S)=\omega(H)-1=\chi(H)-1.
\]
Thus each outer iteration removes one stable set and decreases the chromatic number of the remaining perfect graph by exactly one. Starting from \(G\), the algorithm therefore performs exactly \(\chi(G)\) outer iterations and uses exactly \(\chi(G)\) colors.

It remains to bound the number of iterations of every inner loop. Fix again one outer iteration, and let $H(V_H,E_H)$ be the graph considered during that {outer} iteration. 
{At inner iteration \(t\), denote by \(\mathcal{C}_t=\{C_1,\dots,C_t\}\) the collection of maximum cliques accumulated so far.} Define the affine subspace
\[
L^t \coloneqq \bigl\{x\in\mathbb{R}^{|V_H|} : x(C_i)=1 \text{ for all } i=1,\dots,t\bigr\},
\]
where
\(
x(C_i)\coloneqq \sum_{v\in C_i} x_v.
\)
{Since \(L^{t+1}\) is obtained from \(L^t\) by adding the constraint \(x(C_{t+1})=1\), we have \(L^{t+1}\subseteq L^t\).} Let $S_t$ be the stable set returned by Algorithm~\ref{alg:blackbox1} when the current clique collection is $\mathcal{C}_t$. {Write \(\mathbf{1}_{S_t}\in\mathbb{R}^{|V_H|}\) for the characteristic vector of \(S_t\), with \((\mathbf{1}_{S_t})_v=1\) for \(v\in S_t\) and \((\mathbf{1}_{S_t})_v=0\) otherwise.}
By Proposition~\ref{prop:blackbox1-correctness}, $S_t$ intersects every clique in $\mathcal{C}_t$. Thus, we have
\(
\mathbf{1}_{S_t}\in L^t.
\)
If the algorithm proceeds to the next inner iteration, then it has found a new maximum clique $C_{t+1}$ in $H\setminus S_t$. Since $S_t$ and $C_{t+1}$ are disjoint, we have \(
\mathbf{1}_{S_t}\notin L^{t+1}.
\) This indicates that $L^{t+1}\subsetneq L^t.$ Thus, the dimension of the affine subspace decreases by at least one at each inner iteration. We know that the affine subspaces are in $\mathbb{R}^{|V_H|}$, hence the inner loop can iterate at most $|V_H|\leq |V|$ times. 
\end{proof}



It follows from the analysis above that if the clique decision oracle runs in polynomial time, then Algorithm~\ref{alg:find-critical-stable-set} runs in polynomial time.


\section{Certifying nonexistence of a clique of a given size in a perfect graph}\label{sec:certificate-for-refuting-clique-of-size-k}
In the previous section, we showed that optimal coloring of perfect graphs can be reduced, in polynomial time and via combinatorial operations, to deciding whether the graph has a clique of a given size. In this section, we provide a certificate for the ``no'' answer to this question in terms of counting closed walks in the graph with certain weights assigned to its edges and nonedges. This certificate will play a central role in the design of our main algorithm in the next section.



We begin by introducing notation for walks in weighted graphs. If $G$ is a weighted graph, we write \(A_G\in\symm\) for its weighted adjacency matrix, where \((A_G)_{ij}\) denotes the weight of the edge \(ij\). The diagonal entries of $A_G$ denote the weight of self loops. We now define the total weight of walks in such a graph.

\begin{definition}[Total weight of walks]\label{def:total-weight-of-walks}
Let $T$ be a positive integer and $G$ be a weighted graph with weighted adjacency matrix $A_G \in \symm$. We define $\psi_{T,A_G}(i,j)$ to be the total weight of length $T$ walks in $G$ from vertex $i$ to vertex $j$:
   \[
   \psi_{T,A_G}(i,j) = \sum_{(v_1,\dots, v_{T+1}): v_1 = i, v_{T+1}=j} \prod_{m=1}^{T} (A_G){v_m, v_{m+1}}.
   \]
   Here the sum is over all $(T+1)$-tuples $(v_1, v_2,\ldots, v_{T+1})$ of vertices with $v_1 = i$ and $v_{T+1} = j$. We denote the total weight of (closed) walks of length $T$ from vertex $i$ to itself by
    \(
    \psi_{T,A_G}(i) \coloneqq \psi_{T,A_G}(i,i).
    \)
\end{definition}

In our algorithm, we will assign weights to edges and nonedges of a perfect graph $G(V,E)$ according to a probability vector $p\in\Delta_{|\bar{E}|+1}$ to get a weighted graph $G_p$. More precisely, the weighted adjacency matrix is given by
\begin{equation}\label{eq:AGp}
    A_{G_p} = \sum_{ij\in \bar{E}} -p_{ij} Y_{ij} + p_0 J + I,
\end{equation}
where the matrices $Y_{ij}$, $J$, $I$ are as introduced in Section~\ref{sec:notations}, and $p = (p_0, p_{ij})$ for $ij \in \bar{E}$ is a probability vector defined over the nonedges of $G$ and a special index $0$. Figure~\ref{fig:first-step-example} illustrates an example of a perfect graph $G$ and the resulting weighted graph $G_p$ when $p$ is the uniform distribution. 



\begin{figure}[htbp]
    \centering
    \begin{tikzpicture}[
        node style/.style={
            draw=red, circle, thick, fill=white,
            minimum size=6mm, inner sep=0pt, font=\sffamily\small
        },
        red edge/.style={draw=red!80, thick},
        zero edge/.style={draw=black!60, dashed, thick},
        loop style/.style={draw=blue, thick, looseness=10},
        weight style/.style={
            font=\scriptsize\sffamily,
            fill=white,
            inner sep=1pt
        },
        loop weight/.style={
            font=\scriptsize\sffamily,
            fill=white,
            inner sep=1pt
        },
        subtitle/.style={font=\bfseries\small}
    ]

        \begin{scope}[shift={(0,0)}]
            \node[node style] (1) at (90:1.8) {1};
            \node[node style] (2) at (18:1.8) {2};
            \node[node style] (3) at (-54:1.8) {3};
            \node[node style] (4) at (-126:1.8) {4};
            \node[node style] (5) at (162:1.8) {5};

            \draw[red edge] (1) -- (2) -- (3) -- (4) -- (5) -- (1);
            \draw[red edge] (2) -- (5);

            \node[subtitle] at (0,-3) {Perfect Graph $G$};
        \end{scope}

        \begin{scope}[shift={(6.2,0)}]
            \foreach \i [evaluate=\i as \ang using {90-72*(\i-1)}] in {1,...,5} {
                \node[node style] (n\i) at (\ang:1.8) {\i};

                \pgfmathsetmacro{\outang}{\ang + 35}
                \pgfmathsetmacro{\inang}{\ang - 35}
                \draw[loop style] (n\i) to [out=\outang, in=\inang] (n\i);

                \node[loop weight] at (\ang:2.85) {$1.2$};
            }

            \draw[red edge] (n1) -- node[weight style, midway, sloped, above] {$0.2$} (n2);
            \draw[red edge] (n2) -- node[weight style, midway, sloped, above] {$0.2$} (n3);
            \draw[red edge] (n3) -- node[weight style, midway, sloped, above] {$0.2$} (n4);
            \draw[red edge] (n4) -- node[weight style, midway, sloped, above] {$0.2$} (n5);
            \draw[red edge] (n5) -- node[weight style, midway, sloped, above] {$0.2$} (n1);
            \draw[red edge] (n2) -- node[weight style, midway, above] {$0.2$} (n5);

            \draw[zero edge] (n1) -- node[weight style, midway, sloped, above] {$0$} (n3);
            \draw[zero edge] (n1) -- node[weight style, midway, sloped, above] {$0$} (n4);
            \draw[zero edge] (n2) -- node[weight style, midway, sloped, below] {$0$} (n4);
            \draw[zero edge] (n3) -- node[weight style, midway, sloped, below] {$0$} (n5);

            \node[subtitle] at (0,-3) {Weighted Graph $G_p$};
        \end{scope}

    \end{tikzpicture}
    \caption{An example of a perfect graph $G$ and its associated weighted graph $G_p$ when $p$ is the uniform distribution.}
    \label{fig:first-step-example}
\end{figure}

The following theorem provides a certificate of nonexistence of a clique of a given size in a perfect graph $G$ by counting the total weight of closed walks of a certain length in $G_p$ for some probability vector $p$. 
\begin{theorem}[Certificate of nonexistence of a clique of size $k$]\label{thm:inexact-certificate-for-refuting-clique-of-size-k}
    Let $G(V,E)$ be a perfect graph with $|V| = n$, $k\in \{2,\dots, n\}$, and fix $\hat{\epsilon}\leq  \frac{1}{(k+1)\pigl((k-1)n(n-1) + 2\pigr)}$. Then for any $\hat{T} \geq \frac{1}{\hat{\epsilon}} \ln (\frac{2n}{\hat{\epsilon}})  $, the following two statements are equivalent:
    \begin{enumerate}[label=\roman*.]
        \renewcommand\theenumi{\roman{enumi}}      
        \renewcommand\labelenumi{(\theenumi)}      
        \item $G$ has no clique of size $k$,
        
        \item
        there exists a probability vector $p\in\Delta_{|\bar{E}| + 1}$, with $p = (p_0, p_{ij})$ for $ij\in \bar{E}$, such that the walk count ratio of the weighted graph with adjacency matrix $A_{G_p} = \sum_{ij\in \bar{E}}-p_{ij}Y_{ij} + p_0 J + I$ satisfies
    $$\max_{i\in S}\frac{\psi_{2\hat{T}+1,A_{G_p}}(i)}{\psi_{2\hat{T},A_{G_p}}(i)}< (1-\hat{\epsilon})(p_0k+1),$$
    where 
        \[
S = \left\{ i \in [n] \mid \psi_{2\hat T,A_{G_p}}(i) > 0 \right\}.
\]
    
    \end{enumerate}
    
\end{theorem}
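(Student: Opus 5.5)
The plan is to compare the walk-count ratio with $\lambda_{\max}(A_{G_p})$, and then use a minimax argument over $p$ to relate $\min_p \lambda_{\max}$ to a Lov\'asz-theta-type SDP. For perfect graphs the coloring gives the needed bound on that SDP.

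\textbf{Step 1 (spectral facts about $A_{G_p}$).} Write $A=A_{G_p}=\sum_l \lambda_l u_lu_l\tran$, so that $\psi_{T,A}(i)=\sum_l \lambda_l^T u_l(i)^2$.

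First, $A\succeq 0$. The matrix $\sum p_{ij}Y_{ij}$ has absolute row sums at most $\sum_{ij\in\bar E} p_{ij}=1-p_0$. Hence $I-\sum p_{ij}Y_{ij}\succeq p_0 I\succeq 0$, and adding $p_0J\succeq 0$ keeps it positive semidefinite.

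Consequently the weights $\lambda_l^{2\hat T}u_l(i)^2$ are nonnegative. For every $i\in S$, the ratio $\psi_{2\hat T+1}(i)/\psi_{2\hat T}(i)$ is then a weighted average of the $\lambda_l$, so it is at most $\lambda_{\max}(A)$.

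Conversely, pick $i$ maximizing $u_1(i)^2$, where $u_1$ is a top eigenvector; then $u_1(i)^2\ge 1/n$. Eigenvalues below $(1-\hat\epsilon/2)\lambda_{\max}$ carry total relative weight at most $n(1-\hat\epsilon/2)^{2\hat T}$. With $\hat T\ge \hat\epsilon^{-1}\ln(2n/\hat\epsilon)$, this is the usual power-method estimate and gives
\[
\frac{\psi_{2\hat T+1}(i)}{\psi_{2\hat T}(i)}\ge (1-\hat\epsilon)\lambda_{\max}(A).
\]
(Some constant fiddling may be needed.)

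\textbf{Step 2 ((ii)$\Rightarrow$(i)).} Suppose $C$ is a clique of size $k$ with indicator vector $x$. No nonedge lies inside $C$, so $x\tran A x = k+p_0k^2$, and therefore $\lambda_{\max}(A)\ge p_0k+1$ for every $p$. By Step 1, some $i$ has ratio at least $(1-\hat\epsilon)(p_0k+1)$, which contradicts (ii).

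\textbf{Step 3 ((i)$\Rightarrow$(ii)).} By Step 1 it suffices to find $p$ with $\lambda_{\max}(A_{G_p})<(1-\hat\epsilon)(p_0k+1)$. Define
\[
f(p,X)=\langle A_{G_p}-(p_0k+1)I,X\rangle=p_0(\langle J,X\rangle-k)-\sum_{ij\in\bar E}2p_{ij}X_{ij},
\]
with $X$ in the spectraplex $\{X\succeq0,\ \tr X=1\}$. This function is bilinear over compact convex sets, so von Neumann's minimax theorem gives
\[
\min_p\bigl(\lambda_{\max}(A_{G_p})-(p_0k+1)\bigr)=\max_X \min\Bigl(\langle J,X\rangle-k,\ \min_{ij\in\bar E}(-2X_{ij})\Bigr).
\]

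The key inequality is a sandwich bound of Szegedy $\theta^+$ type: for any $X\succeq 0$ with $\tr X=1$,
\[
\langle J,X\rangle\le \omega(G)+\omega(G)\sum_{ij\in\bar E}2\max(X_{ij},0).
\]
To prove it, take an optimal coloring $S_1,\dots,S_\omega$ with indicator vectors $v_c$. Cauchy--Schwarz gives $\omega\sum_c v_cv_c\tran-J\succeq 0$. Pairing this with $X$ yields $\langle J,X\rangle\le \omega\bigl(1+\sum_c\sum_{i\ne j\in S_c}X_{ij}\bigr)$, and pairs inside a color class are nonedges. Perfection is used exactly here, through $\chi=\omega\le k-1$.

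Now let $\eta=\max_{ij\in\bar E}X_{ij}$. Then $\langle J,X\rangle-k\le -1+(k-1)n(n-1)\eta$, while the second term is $\le -2\eta$. Balancing these two bounds shows the max--min value is at most $-c$ with $c=\tfrac{2}{(k-1)n(n-1)+2}$.

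Hence some $p$ satisfies $\lambda_{\max}\le (p_0k+1)-c$. Since $p_0k+1\le k+1$, the hypothesis on $\hat\epsilon$ gives $c>\hat\epsilon(k+1)\ge\hat\epsilon(p_0k+1)$, which yields the strict inequality.

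\textbf{Main obstacle.} I expect the main obstacle to be the quantitative bookkeeping. This includes the balancing in Step 3 that produces exactly the stated $\hat\epsilon$, and making the power-method bound in Step 1 give the factor $(1-\hat\epsilon)$ with the stated $\hat T$; I would first try splitting the spectrum at $(1-\hat\epsilon/2)\lambda_{\max}$. The structural content — the minimax step and the coloring certificate — should be routine.
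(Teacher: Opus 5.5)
Your proposal is correct. Steps 1--2 are essentially the paper's argument: positive semidefiniteness via Gershgorin, the Rayleigh-quotient upper bound, the spectral split at $(1-\hat\epsilon/2)\lambda_{\max}$, and the $k$-clique giving $\lambda_{\max}(A_{G_p})\ge p_0k+1$.

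\textbf{Step 1.} The only difference here is that you bound the ratio at a single vertex maximizing $u_1(i)^2$. The paper instead averages over $S$ with weights $\psi_{2\hat T}(i)$, which reduces to $\tr(A^{2\hat T+1})/\tr(A^{2\hat T})$. Both give the error factor $n(1-\hat\epsilon/2)^{2\hat T}\le\hat\epsilon/2$, so no constant fiddling is needed.

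\textbf{(i)$\Rightarrow$(ii).} This is where your route genuinely differs. The paper uses no minimax theorem. It writes the certificate down explicitly: $p^\star$ is the normalization of $(w_0,w_{ij})$ with $w_0=\frac{1}{k-1}$ and $w_{ij}=\mathbf{1}(\mathcal{C}(i)=\mathcal{C}(j))$. It then applies the same Cauchy--Schwarz inequality $J\preceq(k-1)\sum_c v_cv_c\tran$ that you use, but on the $p$ side. This yields $\lambda_{\max}(A_{G_{p^\star}})\le p^\star_0(k-1)+1$ with $p^\star_0\ge\frac{2}{(k-1)n(n-1)+2}$, which is exactly your margin $c$.

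You instead pair that matrix inequality with the primal variable $X$, bound the max--min value of the game, and transfer the bound to some $p$ via von Neumann's theorem over the spectraplex. This is valid, with one small repair: replace $\eta$ by $\max(\eta,0)$ in your first bound. When $\eta\le 0$, the first term is already at most $-1\le -c$, so the conclusion is unaffected.

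\textbf{Comparison.} Your route makes explicit the Szegedy-type duality that the paper only discusses informally in Section~\ref{sec:connection}. The cost is constructiveness and elementarity: you obtain only the existence of $p$, through a continuous minimax argument. The paper's $p^\star$ takes a few lines of algebra and is reused as the comparator in the KL Lyapunov argument of Section~\ref{sec:proof_of_correctness}. Any $p$ with your margin would serve there equally well, since only \eqref{eq:p-star-bound} and $p^\star\in\Delta_{|\bar E|+1}$ are used.
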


Throughout the paper, we refer to the quantity 
$\max_{i\in S}\frac{\psi_{2\hat{T}+1,A_{G_p}}(i)}{\psi_{2\hat{T},A_{G_p}}(i)}$
as the \emph{walk count ratio}. This quantity depends on the graph $G$, the probability vector $p$, and the integer $\hat{T}$. We defer the proof of Theorem~\ref{thm:inexact-certificate-for-refuting-clique-of-size-k} until after some auxiliary lemmas. 
The first lemma recalls the connection between walk counting and matrix multiplication. We omit the proof of this standard fact.
\begin{lemma}\label{lem:connection-between-walk-counting-and-matrix-multiplication}
    Let $G$ be a weighted graph on $n$ vertices with weighted adjacency matrix $A_G$. Then for any $i,j\in [n]$ and any integer $t\geq 1$, we have $$\psi_{t,A_G}(i,j) = (A_G^t)_{ij}.$$
\end{lemma}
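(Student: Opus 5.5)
The plan is to argue by induction on $t$, comparing the recursive structure of walks with the recursive definition of matrix powers. Throughout, we read the summand in Definition~\ref{def:total-weight-of-walks} as $\prod_{m=1}^{t}(A_G)_{v_m,v_{m+1}}$, the product of the entries of $A_G$ along consecutive vertices of the walk; diagonal entries, i.e.\ self-loop weights, are allowed since consecutive vertices may coincide.

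For the base case $t=1$, the only tuples in the sum are the pairs $(v_1,v_2)=(i,j)$. Hence $\psi_{1,A_G}(i,j)=(A_G)_{ij}=(A_G^1)_{ij}$.

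For the inductive step, suppose $\psi_{t,A_G}(i,\ell)=(A_G^t)_{i\ell}$ for all $i,\ell\in[n]$. Take a $(t+2)$-tuple $(v_1,\dots,v_{t+2})$ with $v_1=i$ and $v_{t+2}=j$, and split it according to its penultimate vertex $\ell\coloneqq v_{t+1}\in[n]$. Its weight factors as
\[
\Bigl(\prod_{m=1}^{t}(A_G)_{v_m,v_{m+1}}\Bigr)(A_G)_{\ell j}.
\]
The map from such tuples to pairs (length-$t$ tuple from $i$ to $\ell$, vertex $\ell$) is a bijection. Summing over all tuples with a fixed $\ell$ therefore gives $\psi_{t,A_G}(i,\ell)\,(A_G)_{\ell j}$. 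Summing over $\ell$ and applying the induction hypothesis gives
\[
\psi_{t+1,A_G}(i,j)=\sum_{\ell=1}^n \psi_{t,A_G}(i,\ell)(A_G)_{\ell j}=\sum_{\ell=1}^n (A_G^t)_{i\ell}(A_G)_{\ell j}=(A_G^{t+1})_{ij},
\]
where the last equality is the definition of matrix multiplication.

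There is no real obstacle here. The only point needing care is the regrouping of the finite sum by the value of $v_{t+1}$, together with the bijection between length-$(t+1)$ walks from $i$ to $j$ and pairs consisting of a length-$t$ walk from $i$ to $\ell$ followed by the step $\ell\to j$. Since $A_G$ may have negative or zero entries, we note that the argument uses only finite sums and products, with no positivity assumptions.
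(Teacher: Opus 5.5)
Your proof is correct and follows essentially the same approach as the paper. The paper omits the proof as standard, but its justification of Algorithm~\ref{alg:counting-path-alg} uses the same recurrence, $\psi_{m+1,A}(i,j)=\sum_{k}A_{k,j}\,\psi_{m,A}(i,k)$, which you obtain by splitting each walk according to its penultimate vertex.
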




The next lemma shows that when the weighted adjacency matrix is nonzero and positive semidefinite, its largest eigenvalue is well approximated by the walk count ratio for sufficiently large \(\hat T\). The proof of this lemma appears in Appendix~\ref{app:proof:approximation-via-counting-paths}.

\begin{lemma}\label{lem:approximation-via-counting-paths}
    Let $G$ be a weighted graph with a nonzero positive semidefinite weighted adjacency matrix $A_G$, and let $\psi$ be defined as in Definition~\ref{def:total-weight-of-walks}. Then for any $\epsilon\in (0,1]$ and $\hat{T} \geq \frac{1}{\epsilon}\ln(\frac{2n}{\epsilon})$,
    we have
    $$(1-\epsilon)\lambda_\max(A_G) \leq \max_{i\in S}\frac{\psi_{2\hat{T}+1,A_G}(i)}{\psi_{2\hat{T},A_G}(i)} \leq \lambda_\max(A_G),$$
    where 
    \[
S = \left\{ i \in [n] \mid \psi_{2\hat T,A_G}(i) > 0 \right\}.
\]

\end{lemma}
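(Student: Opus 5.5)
The plan is to work in the eigenbasis of $A_G$. Write $A_G=\sum_{l=1}^n \lambda_l u_l u_l\tran$ with $\lambda_1=\lambda_{\max}(A_G)\ge\lambda_2\ge\dots\ge\lambda_n\ge 0$ and orthonormal $u_l$. By Lemma~\ref{lem:connection-between-walk-counting-and-matrix-multiplication},
\[
\psi_{t,A_G}(i)=(A_G^t)_{ii}=\sum_{l}\lambda_l^t (u_l)_i^2 .
\]
For $i\in S$, the ratio $\psi_{2\hat T+1}(i)/\psi_{2\hat T}(i)$ is therefore a weighted average of the eigenvalues $\lambda_l$, with nonnegative weights proportional to $\lambda_l^{2\hat T}(u_l)_i^2$. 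The upper bound follows at once: every such average is at most $\lambda_1$. Note also that $S$ is nonempty, since $\sum_i\psi_{2\hat T}(i)=\tr(A_G^{2\hat T})\ge \lambda_1^{2\hat T}>0$ because $A_G\neq 0$.

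For the lower bound, I would pick the index $i$ maximizing $(u_1)_i^2$, so that $(u_1)_i^2\ge 1/n$ and hence $\psi_{2\hat T}(i)>0$. Split the eigenvalues into the \emph{large} ones, $\lambda_l\ge(1-\epsilon/2)\lambda_1$, and the \emph{small} ones. Bound the numerator from below by dropping the small terms and using $\lambda_l\ge(1-\epsilon/2)\lambda_1$ on the large ones:
\[
\psi_{2\hat T+1}(i)\ge(1-\epsilon/2)\lambda_1 \sum_{\text{large}}\lambda_l^{2\hat T}(u_l)_i^2 .
\]
For the denominator, the small terms contribute at most $\bigl((1-\epsilon/2)\lambda_1\bigr)^{2\hat T}\sum_l (u_l)_i^2=(1-\epsilon/2)^{2\hat T}\lambda_1^{2\hat T}$, since $\sum_l (u_l)_i^2=1$. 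Meanwhile the large part is at least $\lambda_1^{2\hat T}/n$. So the small part is at most $n(1-\epsilon/2)^{2\hat T}$ times the large part. Using $(1-\epsilon/2)^{2\hat T}\le e^{-\epsilon\hat T}\le \epsilon/(2n)$, which follows from the hypothesis on $\hat T$, the small part is at most $(\epsilon/2)$ times the large part. Hence the ratio is at least
\[
\frac{1-\epsilon/2}{1+\epsilon/2}\,\lambda_1\ge(1-\epsilon)\lambda_1 .
\]
The last inequality is elementary: $(1-\epsilon/2)\ge(1-\epsilon)(1+\epsilon/2)$ reduces to $\epsilon^2/2\ge0$.

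The only delicate point is calibrating the threshold ($\epsilon/2$) and the exponent so that the stated condition $\hat T\ge\frac1\epsilon\ln\frac{2n}{\epsilon}$ suffices. The choice above fits exactly: it uses $(1-x)\le e^{-x}$ with $x=\epsilon/2$ and exponent $2\hat T$. Positive semidefiniteness is essential, because it makes every weight $\lambda_l^{2\hat T+1}(u_l)_i^2$ nonnegative, which justifies dropping the small terms in the numerator.
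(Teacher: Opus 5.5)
Your proof is correct. The spectral core (splitting the eigenvalues at $(1-\epsilon/2)\lambda_1$, using $n(1-\epsilon/2)^{2\hat T}\le \epsilon/2$, and concluding with $\frac{1-\epsilon/2}{1+\epsilon/2}\lambda_1\ge(1-\epsilon)\lambda_1$) is the same as in the paper. The difference is in how you pass from $\max_{i\in S}$ to a spectral quantity.

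The paper never selects a vertex. It notes that $i\notin S$ forces $A_G^{\hat T}e_i=0$, and hence $\psi_{2\hat T+1,A_G}(i)=0$. So the maximum of the per-vertex ratios dominates their average weighted by $\psi_{2\hat T,A_G}(i)$, which equals $\tr(A_G^{2\hat T+1})/\tr(A_G^{2\hat T})=\sum_l\lambda_l^{2\hat T+1}/\sum_l\lambda_l^{2\hat T}$. The split is then done on these eigenvalue sums, and the factor $n$ enters because there are at most $n$ small eigenvalues, each compared against $\lambda_1^{2\hat T}$.

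You instead exhibit a single good vertex, the coordinate of largest magnitude in the top eigenvector $u_1$. You then run the classical power-method analysis from the start vector $e_i$, so the factor $n$ enters through $(u_1)_i^2\ge 1/n$. Both arguments lose exactly the same factor, so the same hypothesis on $\hat T$ suffices.

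The paper's route uses only the spectrum, with no eigenvectors, and has a natural walk-counting reading: the ratio of total closed-walk weights is already a $(1-\epsilon)$-approximation. Its cost is the bookkeeping for $i\notin S$. Your route avoids that step and gives $S\neq\varnothing$ as a by-product, but requires the eigenvector expansion of the diagonal entries of $A_G^t$.
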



Our final lemma shows that the matrix $A_{G_p}$ defined in~\eqref{eq:AGp} always satisfies the assumption of Lemma~\ref{lem:approximation-via-counting-paths}. The proof of this lemma can be found in Appendix~\ref{app:proof:lower-bound-eigenvalue}.

\begin{lemma}\label{lem:lower-bound-eigenvalue}
    For any graph $G(V,E)$, let $A_{G_p} = \sum_{ij\in \bar{E}}-p_{ij} Y_{ij} + p_0 J + I$, where $p = (p_0, p_{ij})$ for $ij\in \bar{E}$ is a probability vector. Then $A_{G_p}$ is nonzero and positive semidefinite.
\end{lemma}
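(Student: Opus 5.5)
We have to prove Lemma~\ref{lem:lower-bound-eigenvalue}: for any probability vector $p$, the matrix $A_{G_p}=\sum_{ij\in\bar E}-p_{ij}Y_{ij}+p_0J+I$ is nonzero and positive semidefinite. The plan is to write $A_{G_p}$ as a sum of matrices that are each visibly positive semidefinite.

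\textbf{Nonzero.} This part is immediate. Every diagonal entry of $A_{G_p}$ equals $1+p_0\ge 1$, because $Y_{ij}$ with $i\neq j$ has zero diagonal and $J$ has all ones. Hence $A_{G_p}\neq 0$. The same fact gives $\tr(A_{G_p})=n(1+p_0)>0$, which is also what Lemma~\ref{lem:approximation-via-counting-paths} needs, since it guarantees $\lambda_\max>0$.

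\textbf{Positive semidefinite.}
\begin{itemize}
\item First, $p_0J\succeq 0$, since $J=\mathbf{1}\mathbf{1}\tran$ and $p_0\ge 0$. So it suffices to show $I-\sum_{ij\in\bar E}p_{ij}Y_{ij}\succeq 0$.
\item Use $\sum_{ij\in\bar E}p_{ij}\le 1-p_0\le 1$, so that $I\succeq \bigl(\sum_{ij\in\bar E}p_{ij}\bigr)I$. Then
\[
I-\sum_{ij\in\bar E}p_{ij}Y_{ij}\;\succeq\;\sum_{ij\in\bar E}p_{ij}\,(I-Y_{ij}).
\]
\item Each $I-Y_{ij}$ is positive semidefinite. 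It equals $(e_i-e_j)(e_i-e_j)\tran$ plus the projection onto the coordinates outside $\{i,j\}$. Equivalently, $Y_{ij}$ has eigenvalues $1,-1$ and $0$ with multiplicity $n-2$, so $\lambda_\max(Y_{ij})=1$.
\item Nonnegative combinations of positive semidefinite matrices are positive semidefinite, which finishes the argument.
\end{itemize}

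\textbf{Alternative form.} The same bound can be phrased through eigenvalues: for a unit vector $x$, the triangle inequality gives $x\tran\bigl(\sum_{ij\in\bar E} p_{ij}Y_{ij}\bigr)x\le \sum_{ij\in\bar E} p_{ij}\cdot 1\le 1$. Either form works.

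\textbf{Main step.} The only real step is bounding $\lambda_\max\bigl(\sum_{ij\in\bar E} p_{ij}Y_{ij}\bigr)\le 1$ using the probability constraint; the mass budget $p_0+\sum_{ij\in\bar E} p_{ij}=1$ is exactly what makes $I$ large enough. No graph structure or perfectness is needed, consistent with the lemma being stated for any graph $G$.
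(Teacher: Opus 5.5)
Your proof is correct, and the nonzero part matches the paper's argument. For positive semidefiniteness the routes differ.

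The paper argues spectrally. It proves $\lambda_{\min}(B+C)\ge\lambda_{\min}(B)+\lambda_{\min}(C)$ via the Rayleigh quotient and drops $\lambda_{\min}(p_0J)\ge 0$. It then invokes the Gershgorin circle theorem to get $\lambda_{\min}\bigl(-\sum_{ij\in\bar E}p_{ij}Y_{ij}\bigr)\ge -1$.

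You instead write the matrix as a nonnegative combination of visibly PSD matrices. Since $1-\sum_{ij\in\bar E}p_{ij}=p_0$, your inequality step is actually an identity:
\[
A_{G_p}=p_0\,(J+I)+\sum_{ij\in\bar E}p_{ij}\,(I-Y_{ij}).
\]
So $A_{G_p}$ is a convex combination of PSD matrices. Equivalently, $x\tran A_{G_p}x=p_0\bigl((\sum_i x_i)^2+\|x\|_2^2\bigr)+\sum_{ij\in\bar E}p_{ij}\bigl((x_i-x_j)^2+\sum_{l\neq i,j}x_l^2\bigr)$ is an explicit sum of squares. Your route gives a self-contained, certificate-style proof with no eigenvalue-localization theorem, which fits the paper's combinatorial aims.

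The Gershgorin route has one advantage: it needs only a per-vertex budget, namely each row sum $\sum_{j:\,ij\in\bar E}p_{ij}\le 1$. Your argument as written uses the global budget $\sum_{ij\in\bar E}p_{ij}\le 1$. You can recover the per-vertex version by bounding $2p_{ij}x_ix_j\le p_{ij}(x_i^2+x_j^2)$ instead of using $\lambda_{\max}(Y_{ij})\le 1$. For the lemma as stated, both arguments suffice.
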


We are now ready to prove Theorem~\ref{thm:inexact-certificate-for-refuting-clique-of-size-k}.
\begin{proof}[Proof of Theorem~\ref{thm:inexact-certificate-for-refuting-clique-of-size-k}]
    Fix $k\in \{2,\dots,n\}$, $\hat{\epsilon}\leq  \frac{1}{(k+1)\mathlarger((k-1)n(n-1) + 2\mathlarger)}$, and $\hat{T} \geq \frac{1}{\hat{\epsilon}} \ln (\frac{2n}{\hat{\epsilon}})  $. 

    \textbf{(ii) $\Rightarrow$ (i).} Suppose there is a probability vector $p = (p_0, p_{ij})$ for $ij\in \bar{E}$, such that
    $$\max_{i\in S}\frac{\psi_{2\hat{T}+1,A_{G_p}}(i)}{\psi_{2\hat{T},A_{G_p}}(i)}< (1-\hat{\epsilon})(p_0k+1).$$

    By Lemma~\ref{lem:lower-bound-eigenvalue}, we know $A_{G_p}$ is nonzero and positive semidefinite. Hence Lemma~\ref{lem:approximation-via-counting-paths} applies and gives
    $$(1-\hat{\epsilon})\lambda_\max(A_{G_p}) \leq \max_{i\in S}\frac{\psi_{2\hat{T}+1,A_{G_p}}(i)}{\psi_{2\hat{T},A_{G_p}}(i)} < (1-\hat{\epsilon})(p_0k + 1).$$
    Since $1-\hat{\epsilon}>0$, it follows that
    $$\lambda_\max\left(\sum_{ij\in \bar{E}}-p_{ij}Y_{ij} + p_0 J\right) = \lambda_\max(A_{G_p})-1 < p_0k.$$


    Suppose for the sake of contradiction that \(G\) contains a clique \(C\) of size \(k\). Consider the $|C|\times|C|$ matrix
\[
\left(\sum_{ij\in\bar E}-p_{ij}Y_{ij}+p_0J\right)_C,
\]
which is the principal submatrix of $\sum_{ij\in \bar{E}}-p_{ij}Y_{ij} + p_0 J$ indexed by the vertices in $C$. Since \(C\) is a clique, no pair of vertices in \(C\) is a nonedge of \(G\). Hence all terms \(-p_{ij}Y_{ij}\) vanish in this principal submatrix, and so
\[
\left(\sum_{ij\in\bar E}-p_{ij}Y_{ij}+p_0J\right)_C
=
p_0J_C.
\]
Thus,
\[
\lambda_{\max}\!\left(\sum_{ij\in\bar E}-p_{ij}Y_{ij}+p_0J\right)
\ge
\lambda_{\max}(p_0J_C)
=
p_0k,
\]
where the inequality follows, for example, from the Rayleigh quotient inequality. This contradicts the previous strict inequality. Therefore, \(G\) has no clique of size \(k\).

    \textbf{(i) $\Rightarrow$ (ii):} 
Suppose $G$ has no clique of size $k$. Since $G$ is perfect, it follows that it is $(k-1)$-colorable. Pick any $(k-1)$-coloring $\mathcal{C}$ of $G$. Let $w_0 = \frac{1}{k-1}$ and $w_{ij} = \mathbf{1}(\mathcal{C}(i) = \mathcal{C}(j))$.  Let 
$$p_0 = \frac{w_0}{\sum_{ij\in \bar{E}} w_{ij} + w_0},\  p_{ij} = \frac{w_{ij}}{\sum_{ij\in \bar{E}} w_{ij} + w_0}.$$ 

Since vertices of the same color are nonadjacent, $p_{ij}$ is nonzero only when $ij\in \bar{E}$. 
We first claim that
\begin{equation}\label{eq:delta}
    \lambda_\max\left(\sum_{ij\in \bar{E}}-p_{ij}Y_{ij} + p_0 J\right) \leq p_0(k-1). 
\end{equation}
This is equivalent to showing that
$$\lambda_\max\left(\sum_{ij\in \bar{E}}-w_{ij}Y_{ij} +  \frac{1}{k-1}J\right) \leq 1,$$
or that for every $x\in \Real{n}$,
\begin{equation}\label{eq:coloring-ineq}
    \frac{1}{k-1}\left(\sum_{i=1}^n x_i\right)^2 \leq \sum_{i=1}^n x_i^2 + \sum_{ij\in \bar{E}}2w_{ij}x_ix_j.
\end{equation}

Let $I_{\mathcal{C},1},I_{\mathcal{C},2},\dots,I_{\mathcal{C},k-1}$ denote the color classes of the coloring $\mathcal{C}$. The right-hand side of~\eqref{eq:coloring-ineq} can be rewritten as
\begin{align*}
    \sum_{i=1}^n x_i^2 + 2\sum_{ij\in \bar{E}}\mathbf{1}(\mathcal{C}(i) = \mathcal{C}(j))x_ix_j &= \sum_{l = 1}^{k-1}\sum_{i\in I_{\mathcal{C},l}} x_i^2 + \sum_{l=1}^{k-1}\sum_{
    \substack{i<j\\ i,j \in I_{\mathcal{C},l}}}2x_ix_j \\
    &= \sum_{l = 1}^{k-1}\left(\sum_{i\in I_{\mathcal{C},l}} x_i\right)^2.
\end{align*}
Applying the Cauchy-Schwarz inequality to the left-hand side of~\eqref{eq:coloring-ineq}, we have
\begin{align*}
    \frac{1}{k-1}\left(\sum_{i = 1}^{n} x_i\right)^2 = \frac{1}{k-1}\left(\sum_{l = 1}^{k-1}\sum_{i\in I_{\mathcal{C},l}} x_i\right)^2 \leq \sum_{l = 1}^{k-1}\left(\sum_{i\in I_{\mathcal{C},l}} x_i\right)^2.
\end{align*}
Hence, the inequality~\eqref{eq:delta} is proven. Using this inequality, we now verify that $p \coloneqq (p_0,p_{ij})$ for $ij\in\bar{E}$ satisfies condition (ii) of the theorem. Since $w_{ij}\in \{0,1\}$, we have
$$p_0 = \frac{\frac{1}{k-1}}{\sum_{ij\in \bar{E}} w_{ij} + \frac{1}{k-1}} \geq \frac{\frac{1}{k-1}}{\frac{n(n-1)}{2} + \frac{1}{k-1}} = \frac{2}{(k-1)n(n-1) + 2}.$$
Recalling that $\hat{\epsilon} \leq \frac{1}{(k+1)\pigl((k-1)n(n-1) + 2\pigr)}$, it follows that
    \begin{align*}
        (p_0k +1)\hat{\epsilon} &\leq \frac{1}{(k-1)n(n-1) + 2} < p_0,
    \end{align*}
    and therefore
    \begin{equation}\label{eq:coloring-algebra-step}
        p_0(k-1) + 1 < (1-\hat{\epsilon})(p_0k + 1).
    \end{equation}
We now observe that
\begin{align*}
\max_{i\in S}
\frac{\psi_{2\hat{T}+1,A_{G_p}}(i)}
     {\psi_{2\hat{T},A_{G_p}}(i)}
&\le
\lambda_{\max}(A_{G_p}) \\
&=
\lambda_{\max}\!\left(\sum_{ij\in\bar E}-p_{ij}Y_{ij}+p_0J\right)+1 \\
&\le
p_0(k-1)+1 \\
&<
(1-\hat{\epsilon})(p_0k+1),
\end{align*}
where the first inequality follows from Lemma~\ref{lem:approximation-via-counting-paths}, the second from~\eqref{eq:delta}, and the third from~\eqref{eq:coloring-algebra-step}.
This completes the proof.\footnote{Note from the proof that the implication \((ii)\Rightarrow(i)\) in Theorem~\ref{thm:inexact-certificate-for-refuting-clique-of-size-k} holds for arbitrary graphs, while the implication \((i)\Rightarrow(ii)\) holds for graphs \(G\) that satisfy
\(
\chi(G)=\omega(G).
\)}
        
\end{proof}


\section{Deciding if a perfect graph has a clique of a given size} \label{sec:alg-analysis}
Following the reduction established in Section~\ref{sec:reduction-color-clique-decision}, the problem of optimally coloring a perfect graph $G$ reduces to deciding if $G$ has a clique of a given size $k \in \{2, \dots, |V|\}$. In this section, we present a combinatorial algorithm (Algorithm~\ref{alg:decision-algorithm}) for this decision problem. Following Theorem~\ref{thm:inexact-certificate-for-refuting-clique-of-size-k}, the algorithm attempts to construct a probability vector that certifies the nonexistence of a $k$-clique. To do so, it iteratively updates a probability vector based on closed walk counts in $G$ with certain weights assigned to its edges and nonedges. We begin by detailing the walk counting procedure (Algorithm~\ref{alg:counting-path-alg}). We then present the main algorithm (Algorithm~\ref{alg:decision-algorithm}) and establish its correctness and polynomial-time complexity via a Lyapunov function argument.

\subsection{The walk counting algorithm}

The walk counting algorithm (Algorithm~\ref{alg:counting-path-alg}) takes as input a weighted adjacency matrix $A\in\Q{n\times n}$ and a positive integer $\hat{T}$. It outputs a scalar $\beta$, which we call the walk count ratio, and a vector $x\in\Q{n}$, which we call the walk count vector. The expression for the walk count ratio is the same as Theorem~\ref{thm:inexact-certificate-for-refuting-clique-of-size-k}:

$$\beta= \max_{i\in S}\frac{\psi_{2\hat{T}+1,A}(i)}{\psi_{2\hat{T},A}(i)},$$
where $\psi$ is as in Definition~\ref{def:total-weight-of-walks} and $S = \{i\in [n]\mid \psi_{2\hat{T},A}(i) > 0\}.$ Let $i_0$ be an index that achieves this maximum. The walk count vector collects the total weight of walks of length $\hat{T}$ from $i_0$ to all the other vertices; \ie it is defined entrywise as
$$x_j = \psi_{\hat{T},A}(i_0,j), \quad \forall j \in [n].$$

\begin{algorithm}
    \SetAlgoLined
    \caption{Walk counting}\label{alg:counting-path-alg}
    \textbf{Input:} A weighted adjacency matrix $A\in \Q{n\times n}$, a positive integer $\hat{T}\in \Z{} $\\
    \textbf{Output:} A scalar $\beta\in\Q{}$ and a vector $x\in\Q{n}$

    \textbf{Initialize:}  $\psi_{1,A}(i,j) \gets A_{ij}$ for all $i,j\in [n]$

    \For{$t\gets 1\  \KwTo\  2\hat{T}$}{

        \For{$i \gets 1\  \KwTo\  n$}{
            \For{$j\gets 1\  \KwTo\  n$}{
                $\psi_{t+1,A}(i, j) \gets \sum_{i_1\in [n]} A_{i_1,j}\cdot \psi_{t,A}(i,i_1)$
            }
        }
    }
    $S \gets \left\{ i \in [n] \mid \psi_{2\hat T,A}(i) > 0 \right\}$ \\
    $\beta \gets \max_{i\in S}\frac{\psi_{2\hat{T}+1,A}(i)}{\psi_{2\hat{T},A}(i)}$\\
    $i_0\gets \arg\max_{i\in S} \frac{\psi_{2\hat{T}+1,A}(i)}{\psi_{2\hat{T},A}(i)}$ \tcp*[f]{pick any one if the maximizer is not unique}%

    \For{$j \gets 1\  \KwTo\  n$}{
            $x_j\gets \psi_{\hat{T},A}(i_0,j)$
    }

\textbf{Output:} $(\beta, x)$

\end{algorithm}

Let us argue that the walk counting algorithm correctly returns the walk count ratio $\beta$ and the walk count vector $x$ by showing that the nested loops in this algorithm correctly compute $\psi_{m,A}(i,j)$, the total weight of walks of length $m$ from vertex $i$ to vertex $j$. For $m=1$, the algorithm initializes $\psi_{1,A}(i,j) = A_{ij}$, which is consistent with the definition. For $m \geq 1$, the total weight of walks satisfies the recurrence:
\begin{align*}
    \psi_{m+1,A}(i,j) &= \sum_{k \in [n]} A_{k,j} \sum_{\substack{\text{length } m \text{ walk } \\ \ell: \ell(1)=i,\, \ell(m+1)=k}} \prod_{r=1}^{m} A_{\ell(r),\ell(r+1)} \\
    &= \sum_{k \in [n]} A_{k,j} \cdot \psi_{m ,A}(i, k).
\end{align*}
This recurrence is precisely the update rule in the innermost loop of Algorithm~\ref{alg:counting-path-alg}. The time complexity of Algorithm~\ref{alg:counting-path-alg} is $O(n^3\hat{T})$.

\subsection{The main algorithm}\label{sec:correctness-of-algorithm}

In this section, we present our main algorithm (Algorithm~\ref{alg:decision-algorithm}), which given a perfect graph $G(V,E)$ and an integer $k \in \{2, \dots, |V|\}$ determines whether $G$ contains a clique of size $k$.\footnote{As the proof of correctness of Algorithm~\ref{alg:decision-algorithm} will demonstrate, this algorithm will output the correct decision more generally on graphs for which the clique number and the chromatic number coincide.}

\begin{algorithm}
    \SetAlgoLined
    \caption{Deciding if a perfect graph $G$ has a clique of size $k$}\label{alg:decision-algorithm}
    \textbf{Input:} Perfect graph $G(V,E)$ with $|V| = n$, integer $k\in\{2,\dots,n\}$

    {\bf Output:} ``YES'' or ``NO''
    
    Set parameters: $\epsilon = \frac{1}{2n}\frac{1}{n(n-1)(k-1) + 2}$, $\hat{\epsilon} \in \left(0,  \frac{1}{(k+1)\big((k-1)n(n-1) + 2\big)}\right]$, integer~$T > 4\ln(|\bar{E}|+1)n^2\Big((k-1)n(n-1)+2\Big)^2$, integer $ \hat{T}\geq \frac{1}{\hat{\epsilon}} \ln (\frac{2n}{\hat{\epsilon}})$ \tcp*[f]{any value of $\hat{\epsilon}, T, \hat{T}$ which results in $T$ and $\hat{T}$ having polynomial magnitude is acceptable}%

    Initialize weights:
     $w_{ij}^1 \gets 1, \ \text{for } ij\in \bar{E}$, $w_0^1 \gets 1$\\

    \For{$t\gets 1\  \KwTo\  T$}{
    Normalize the weights:
        $$p^t_{ij} \gets \frac{w^t_{ij}}{\sum_{ij\in \bar{E}} w^t_{ij} + w_0^t},\quad p^t_0 \gets \frac{w_0^t}{\sum_{ij\in \bar{E}} w^t_{ij} + w_0^t}$$
         
        Construct a weighted graph $G_{p^t}$ with weighted adjacency matrix $A_{G_{p^t}} = \sum_{ij\in \bar{E}}-p^t_{ij}Y_{ij} + p_0^t J + I$.

        $(\beta^t, x^t)\gets $ Algorithm~\ref{alg:counting-path-alg}$(A_{G_{p^t}}, \hat{T})$ \tcp*[f]{compute walk count ratio and walk counts}%

      \If{$\beta^t < (1-\hat{\epsilon})( p^t_0k + 1)$}{
            \textbf{return} ``NO''
        }
      
        Update weights based on walk counts:
        $$w^{t+1}_{ij} \gets w^t_{ij}\,\left(1 + \frac{\epsilon}{n} \frac{2x^t_ix^t_j}{\|x^t\|_2^2}\right),
         \ w^{t+1}_0 \gets w^t_0\,\left(1 - \frac{\epsilon}{n}\left(\frac{(\sum_{i\in[n]} x_i^t)^2}{\|x^t\|_2^2}-k\right)\right)$$
    }
    \textbf{Return} ``YES'' 
\end{algorithm}

In each iteration, the algorithm maintains a probability vector $p^t\in\Delta_{|\bar{E}|+1}$, which defines a weighted graph $G_{p_t}$ with adjacency matrix $A_{G_{p_t}}$ as in~\eqref{eq:AGp}.
The algorithm then invokes the walk counting algorithm (Algorithm~\ref{alg:counting-path-alg}) on $G_{p^t}$ to compute the walk count ratio $\beta^t$ and the walk count vector $x^t$. 

If $\beta^t < (1-\hat{\epsilon})(p_0^tk + 1)$, Theorem~\ref{thm:inexact-certificate-for-refuting-clique-of-size-k} guarantees that there is no clique of size $k$ in $G$; hence, the algorithm returns ``NO''.  Otherwise, the weights are updated based on the walk counts in $x^t$, and the process repeats. A combinatorial interpretation of the quantities appearing in the weight update step is as follows. Let $i_0^t$ denote the index maximizing the walk count ratio $\beta^t$. Recall that $x^t_j$ represents the total weight of walks of length $\hat{T}$ from $i_0^t$ to $j$ in $G_{p^t}$. The quantities in the update rule have the following combinatorial interpretation:
\begin{itemize}
    \item $x_i^tx_j^t$ is the total weight of walks of length $2\hat{T}$ from $i$ to $j$ passing through $i_0^t$ at the midpoint.
    \item $\|x^t\|_2^2$ is the total weight of closed walks of length $2\hat{T}$ passing through $i_0^t$ at the midpoint.
    \item $(\sum_{i\in[n]} x_i^t)^2$ is the total weight of all walks of length $2\hat{T}$ passing through $i_0^t$ at the midpoint.
\end{itemize}

If the loop terminates after $T$ iterations without a ``NO'' certificate, the algorithm returns ``YES''. We defer the proof of correctness of Algorithm~\ref{alg:decision-algorithm} to the next section. We note that since the bounds on $T$ and $\hat{T}$ are of polynomial magnitude, Algorithm~\ref{alg:decision-algorithm} runs in polynomial time.

%

\subsection{Proof of correctness of the main algorithm}\label{sec:proof_of_correctness}

In this section, we prove the correctness of Algorithm~\ref{alg:decision-algorithm}. The strategy is as follows: If there is a clique of size $k$ in the perfect graph $G$, 
Theorem~\ref{thm:inexact-certificate-for-refuting-clique-of-size-k} implies that no iterate $p^t$ can satisfy the termination condition (in line 9), so the algorithm will necessarily return `YES’. If there is no clique of size $k$ in $G$, we define a Lyapunov function $L:\Delta_{|\bar{E}|+1} \to\Real{}$ to measure how far the current probability vector $p^t$ is from the target probability vector constructed in the proof of Theorem~\ref{thm:inexact-certificate-for-refuting-clique-of-size-k}. We further argue that this Lyapunov function is nonnegative, not too large at the first iterate, and decreases at least by a certain fixed amount in each iteration of the algorithm. From this analysis, it follows that the probability vectors produced by the algorithm must necessarily satisfy the termination condition within $T$ steps, leading to the output `NO' in polynomial time.





Let us now formalize the proof strategy. Suppose the perfect graph $G$ has no clique of size $k$. By perfectness, we can pick a $(k-1)$-coloring of $G$, denoted by $\mathcal{C}$. Then, we let $w_0^\star = \frac{1}{k-1},
w_{ij}^\star =~\mathbf{1}(\mathcal{C}(i) =~\mathcal{C}(j))$, and 
$$p_0^\star = \frac{w_0^\star}{\sum_{ij\in \bar{E}} w_{ij}^\star + w_0^\star},\  p_{ij}^\star = \frac{w_{ij}^\star}{\sum_{ij\in \bar{E}} w_{ij}^\star + w_0^\star}.$$
And we write $p^\star = (p^\star_0, p_{ij}^\star)$, for $ij\in\bar E$. From the proof of Theorem~\ref{thm:inexact-certificate-for-refuting-clique-of-size-k} (more specifically inequality~\eqref{eq:delta}), we know that ${p^\star}$ satisfies
\begin{equation}\label{eq:p-star-bound}
    \lambda_\max\left(\sum_{ij\in \bar{E}} - {p}^\star_{ij} Y_{ij} + p^\star_0 (J - kI)\right) \leq -p_0^\star\leq -\frac{2}{(k-1)n(n-1) + 2}.
\end{equation}
Consider the following Lyapunov function $L:\Delta_{|\bar{E}|+1} \to\Real{}$ defined as
\begin{equation}\label{eq:def-lyapunov}
    L(p) = D_{KL}(p^\star \| p) \coloneqq \sum_{ij\in \bar{E}} p^\star_{ij} \ln \frac{p^\star_{ij}}{p_{ij}} + p^\star_{0}\ln \frac{p^\star_{0}}{p_{0}}.
\end{equation}

{The choice of the Kullback-Leibler (KL) divergence as a Lyapunov function has been made previously; see, e.g.,~\cite{freund1997decision,arora2012mwu}.} Let $x^t\in\Real{n}$ be the walk count vector at iteration $t$ of Algorithm~\ref{alg:decision-algorithm}, and let
$$r^t_{ij} = - \frac{2x^t_ix^t_j}{\|x^t\|_2^2},\ \text{for } ij\in\bar{E},\ \ r^t_0 = \frac{(\sum_{i\in[n]} x_i^t)^2}{\|x^t\|_2^2}-k.$$
Let $p_0^t,p_{ij}^t\in[0,1]$ be the scalars at iteration $t$ of Algorithm~\ref{alg:decision-algorithm}. We define the vectors $p^t\in\Delta_{|\bar E|+1}$ and $r^t\in\Real{|\bar E|+1}$ as
\begin{equation}\label{eq:def-p-t-r-t}
\begin{aligned}
        p^t = (p^t_0,p^t_{ij}), \\
    r^t =(r^t_0, r^t_{ij}).
\end{aligned}
\end{equation}  To prove that our Lyapunov function decreases along the iterations of Algorithm~\ref{alg:decision-algorithm}, we first establish two lemmas.

\begin{lemma}\label{lem:rho-bound}
    Let $r^t$ be the vector defined in~\eqref{eq:def-p-t-r-t} at iteration $t$ of Algorithm~\ref{alg:decision-algorithm} with input $G(V,E)$ and $k\in \{2,\dots,n\}$. We have $\|r^t\|_\infty \leq n$.
\end{lemma}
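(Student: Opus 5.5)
We need to bound each entry of $r^t$ in absolute value by $n$. First we check that $x^t\neq 0$, so the ratios are well defined. Since $i_0^t\in S$, we have $\psi_{2\hat T,A}(i_0)>0$. By Lemma~\ref{lem:connection-between-walk-counting-and-matrix-multiplication} and symmetry of $A=A_{G_{p^t}}$,
\[
\psi_{2\hat T,A}(i_0)=(A^{2\hat T})_{i_0 i_0}=\sum_j (A^{\hat T})_{i_0 j}^2=\|x^t\|_2^2 .
\]
Hence $\|x^t\|_2^2>0$, and we write $x=x^t$ from now on.

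\textbf{Entries $r^t_{ij}$.} The AM--GM inequality gives $2|x_ix_j|\le x_i^2+x_j^2\le \|x\|_2^2$. Therefore
\[
|r^t_{ij}|\le 1\le n .
\]

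\textbf{Entry $r^t_0$.} By Cauchy--Schwarz, $0\le (\sum_i x_i)^2\le n\|x\|_2^2$, so the ratio $(\sum_i x_i)^2/\|x\|_2^2$ lies in $[0,n]$. Subtracting $k$ gives
\[
r^t_0\in[-k,\,n-k].
\]
Since $2\le k\le n$, we have $|r^t_0|\le \max(k,n-k)\le n$.

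Combining the two cases gives $\|r^t\|_\infty\le n$. The only step needing care is the nondegeneracy $\|x^t\|_2>0$, which we obtained from the identity $\|x^t\|_2^2=\psi_{2\hat T,A}(i_0)$ together with $i_0\in S$. The remaining bounds are elementary.
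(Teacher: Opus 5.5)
Your proof is correct and matches the paper's argument: AM--GM gives $|r^t_{ij}|\le 1$, and Cauchy--Schwarz gives $r^t_0\in[-k,\,n-k]$, so $|r^t_0|\le n$. Your check that $\|x^t\|_2^2=\psi_{2\hat T,A}(i_0)>0$ is a small addition the paper leaves implicit; it relies on $S\neq\emptyset$, which follows from Lemma~\ref{lem:lower-bound-eigenvalue} together with the first step of the proof of Lemma~\ref{lem:approximation-via-counting-paths}.
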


\begin{proof}


Note that for $ij\in \bar{E}$, we have $$ |r_{ij}^t| = \left|\frac{2x^t_ix^t_j}{\|x^t\|_2^2}\right|\leq\frac{(x_i^t)^2 + (x_j^t)^2}{\|x^t\|_2^2} \leq 1.$$ 
Furthermore,
\[
r^t_0 = \frac{(\sum_{i\in[n]} x_i^t)^2}{\|x^t\|_2^2}-k \leq \frac{\|x^t\|_1^2}{\|x^t\|_2^2} - k  \leq n - k,
\]
and 
\[
r^t_0 = \frac{(\sum_{i\in[n]} x_i^t)^2}{\|x^t\|_2^2}-k \geq -k,
\]
and hence $|r_0^t| \leq n$. Therefore, $\|r^t\|_\infty \leq n$.
\end{proof}

\begin{lemma}\label{lem:lyapunov-key-inequality}Let $p^t$ and $r^t$ be the vectors defined in~\eqref{eq:def-p-t-r-t} at iteration $t$ of Algorithm~\ref{alg:decision-algorithm} with input $G(V,E)$ and $k\in \{2,\dots,n\}$. We have
$$\frac{p_{m}^t}{p_{m}^{t+1}} \leq \frac{\exp(-\frac{\epsilon}{n}\langle p^t,r^t\rangle)}{1-\frac{\epsilon}{n}r^t_{m}},$$
for all $m\in \bar{E}\cup \{0\}$.
\end{lemma}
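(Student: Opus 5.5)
The plan is to write the update rule in terms of $r^t$ and compute the ratio $p_m^t/p_m^{t+1}$ directly. By the update step of Algorithm~\ref{alg:decision-algorithm} and the definition of $r^t$ in~\eqref{eq:def-p-t-r-t}, every coordinate $m\in\bar E\cup\{0\}$ satisfies
\[
w^{t+1}_m = w^t_m\Bigl(1-\tfrac{\epsilon}{n} r^t_m\Bigr).
\]
Write $W^t \coloneqq \sum_{ij\in\bar E} w^t_{ij} + w^t_0$. Summing the update over all coordinates and dividing by $W^t$ gives
\[
\frac{W^{t+1}}{W^t} = \sum_{m} p^t_m\Bigl(1-\tfrac{\epsilon}{n} r^t_m\Bigr) = 1-\tfrac{\epsilon}{n}\langle p^t,r^t\rangle .
\]
Then
\[
\frac{p^t_m}{p^{t+1}_m} = \frac{w^t_m}{W^t}\cdot\frac{W^{t+1}}{w^{t+1}_m} = \frac{1-\frac{\epsilon}{n}\langle p^t,r^t\rangle}{1-\frac{\epsilon}{n} r^t_m}.
\]

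To finish, apply the elementary inequality $1+u\le e^{u}$ with $u=-\frac{\epsilon}{n}\langle p^t,r^t\rangle$, which bounds the numerator by $\exp(-\frac{\epsilon}{n}\langle p^t,r^t\rangle)$. This step is legitimate only if the denominator $1-\frac{\epsilon}{n}r^t_m$ is positive, so that multiplying the inequality by its reciprocal preserves direction.

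Positivity follows from Lemma~\ref{lem:rho-bound}: since $|r^t_m|\le n$, we get $\frac{\epsilon}{n}|r^t_m|\le\epsilon$. The parameter $\epsilon=\frac{1}{2n}\frac{1}{n(n-1)(k-1)+2}$ is at most $1/4$, so $1-\frac{\epsilon}{n}r^t_m\ge 1-\epsilon>0$.

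The same bound handles degenerate cases. It gives $w^{t+1}_m>0$ whenever $w^t_m>0$, and the initial weights are all $1$, so by induction every weight and every $p^t_m$ stays strictly positive. The ratio is therefore well defined, and $W^{t+1}/W^t\ge 1-\epsilon>0$ as well.

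I also need $x^t\neq 0$ so that $r^t$ is defined. This follows from Lemma~\ref{lem:lower-bound-eigenvalue} together with the choice of $i_0$ in $S$: $\|x^t\|_2^2=\psi_{2\hat T}(i_0)>0$.

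The main obstacle is just bookkeeping: confirming the sign convention, i.e.\ that $w_{ij}$ is multiplied by $1+\frac{\epsilon}{n}\frac{2x_ix_j}{\|x\|^2}=1-\frac{\epsilon}{n}r_{ij}$, and that positivity holds so the inequality direction is preserved. No substantive difficulty is expected.
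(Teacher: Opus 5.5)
Your proposal is correct and follows the paper's proof essentially step for step. Like the paper, you write $w_m^{t+1}=w_m^t\bigl(1-\frac{\epsilon}{n}r_m^t\bigr)$, show the normalizer ratio equals $1-\frac{\epsilon}{n}\langle p^t,r^t\rangle$, bound it by $\exp\bigl(-\frac{\epsilon}{n}\langle p^t,r^t\rangle\bigr)$ via $1+u\le e^u$, and use Lemma~\ref{lem:rho-bound} to guarantee $1-\frac{\epsilon}{n}r_m^t>0$. Your extra checks are details the paper leaves implicit: strict positivity of all weights by induction, and $\|x^t\|_2^2=\psi_{2\hat T}(i_0)>0$ so that $r^t$ is well defined.
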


\begin{proof}
Define $\Phi^t = \sum_{ij\in \bar{E}} w^t_{ij} + w^t_0$. From the update rule of $w^t$ and Lemma~\ref{lem:rho-bound}, we know that $\Phi^t > 0$.
Observe that
    \begin{align*}
        \Phi^{t+1} &= \sum_{k\in \bar{E}\cup \{0\}} w_k^{t+1} \\
    &= \sum_{k\in \bar{E}\cup \{0\}} w_k^t \,(1 - \frac{\epsilon}{n}r^t_k)\\
    &= \Phi^t - \frac{\epsilon}{n}\Phi^t\ip{p^t}{r^t}
    \\
    &= \Phi^t(1-\frac{\epsilon}{n}\ip{p^t}{r^t})\\
    &\le \Phi^t \exp \Bigl(-\tfrac{\epsilon}{n}\ip{p^t}{r^t}\Bigr),
    \end{align*}
    where the last step follows from the facts that $1+x\le e^x$ for all $x\in\Real{}$ and $\Phi_t > 0$.
    Since from Lemma~\ref{lem:rho-bound} we know that $1 - \frac{\epsilon}{n}r^t_m > 0$ for all $m\in \bar{E}\cup \{0\}$, we have
\begin{align*}
    \frac{p_{m}^t}{p_{m}^{t+1}} &= \frac{w^t_m}{w^{t+1}_m}\cdot \frac{\Phi^{t+1}}{\Phi^t} = \frac{1}{1-\frac{\epsilon}{n }r^t_m}\cdot \frac{\Phi^{t+1}}{\Phi^t} \leq \frac{\exp(-\frac{\epsilon}{n }\langle p^t,r^t \rangle)}{1-\frac{\epsilon}{n}r^t_m}.
\end{align*}

\end{proof}

We can now bound the amount by which the proposed Lyapunov function decreases at every iteration of Algorithm~\ref{alg:decision-algorithm}.


\begin{theorem}[Lyapunov function decrease]\label{thm:lyapunov-decrease}
Let $G(V,E)$ be a perfect graph and $k\in \{2,\dots,n\}$. Let $p^t$ be the vector defined in~\eqref{eq:def-p-t-r-t} at iteration $t$ of Algorithm~\ref{alg:decision-algorithm} with input $G(V,E)$ and $k$. If $G$ has no clique of size $k$, then either the termination condition in Algorithm~\ref{alg:decision-algorithm} is satisfied at iteration $t$, namely,
    $$\max_{i\in S^t} \frac{\psi_{2\hat{T} + 1,A_{G_{p^t}}}(i)}{\psi_{2\hat{T},A_{G_{p^t}}}(i)} < (1-\hat{\epsilon})(p_0^t k + 1),$$
    where $S^t\coloneqq \{i\in [n]\mid \psi_{2\hat T,A_{G_{p^t}}}(i) > 0\}$, or the Lyapunov function $L$ defined in~\eqref{eq:def-lyapunov} satisfies
    \begin{equation}\label{eq:lyapunov-decrease-bound}
        L(p^{t+1}) - L(p^t) \leq -\frac{1}{4n^2((k-1)n(n-1) + 2)^2}.
    \end{equation}
    
\end{theorem}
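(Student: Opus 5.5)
We assume the termination condition fails at iteration $t$, i.e., $\beta^t \ge (1-\hat\epsilon)(p_0^t k+1)$, and derive \eqref{eq:lyapunov-decrease-bound}.

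\textbf{Reduce to a bound on $\langle p^t-p^\star, r^t\rangle$.} Expanding the definition \eqref{eq:def-lyapunov}, we have
\[
L(p^{t+1})-L(p^t)=\sum_m p^\star_m\ln\frac{p^t_m}{p^{t+1}_m}.
\]
Lemma~\ref{lem:lyapunov-key-inequality} bounds each term by $-\frac{\epsilon}{n}\langle p^t,r^t\rangle-\ln(1-\frac{\epsilon}{n}r^t_m)$. By Lemma~\ref{lem:rho-bound}, $|\frac{\epsilon}{n}r^t_m|\le\epsilon\le 1/2$, so we may use $-\ln(1-y)\le y+y^2$. Summing against $p^\star$ then gives
\[
L(p^{t+1})-L(p^t)\le \frac{\epsilon}{n}\langle p^\star-p^t,r^t\rangle+\epsilon^2 .
\]
It therefore suffices to show that $\langle p^\star-p^t,r^t\rangle$ is at most roughly $-2p_0^\star$, up to slack.

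\textbf{The $p^\star$ term via the coloring certificate.} Write $M(p)=\sum_{ij\in\bar E}-p_{ij}Y_{ij}+p_0(J-kI)$ and $u=x^t/\|x^t\|_2$. A direct computation gives
\[
\langle p,r^t\rangle=u\tran\Bigl(\sum_{ij\in\bar E}-p_{ij}Y_{ij}+p_0J\Bigr)u-kp_0=u\tran M(p)u .
\]
By \eqref{eq:p-star-bound}, $\langle p^\star,r^t\rangle\le\lambda_{\max}(M(p^\star))\le -p_0^\star\le -\frac{2}{(k-1)n(n-1)+2}$.

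\textbf{The $p^t$ term via the walk counts (main obstacle).} We need $u\tran M(p^t)u\ge -\delta$ for a small $\delta$. Here $u\tran A_{G_{p^t}}u=\langle p^t,r^t\rangle+kp_0^t+1$, and $x^t=A^{\hat T}e_{i_0}$ (by Lemma~\ref{lem:connection-between-walk-counting-and-matrix-multiplication} and symmetry). Hence $\|x^t\|_2^2=\psi_{2\hat T}(i_0)$ and $x^{t\top}Ax^t=\psi_{2\hat T+1}(i_0)$, so $u\tran A_{G_{p^t}}u=\beta^t$. Note $\psi_{2\hat T}(i_0)>0$ because $i_0\in S^t$, and $S^t\neq\emptyset$ since $A$ is nonzero PSD by Lemma~\ref{lem:lower-bound-eigenvalue}. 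Non-termination then yields
\[
\langle p^t,r^t\rangle=\beta^t-(p_0^tk+1)\ge-\hat\epsilon(p_0^tk+1)\ge-\hat\epsilon(k+1)\ge -\frac{1}{(k-1)n(n-1)+2}.
\]
The point to check carefully is this exact identification of the Rayleigh quotient at $x^t$ with the walk count ratio; this is what makes the vector $x^t$, rather than a top eigenvector, sufficient.

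\textbf{Combine.} Setting $c=(k-1)n(n-1)+2$, the two previous steps give $\langle p^\star-p^t,r^t\rangle\le -2/c+1/c=-1/c$. Therefore
\[
L(p^{t+1})-L(p^t)\le-\frac{\epsilon}{nc}+\epsilon^2 .
\]
With $\epsilon=\frac{1}{2nc}$ this equals $-\frac{1}{2n^2c^2}+\frac{1}{4n^2c^2}=-\frac{1}{4n^2c^2}$, which is exactly \eqref{eq:lyapunov-decrease-bound}.
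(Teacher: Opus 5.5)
Your proposal is correct and follows the paper's proof essentially step for step. It uses the same bound $L(p^{t+1})-L(p^t)\le \frac{\epsilon}{n}\langle p^\star-p^t,r^t\rangle+\epsilon^2$ from Lemmas~\ref{lem:rho-bound} and~\ref{lem:lyapunov-key-inequality}, the same identity $\langle p^t,r^t\rangle=\beta^t-p_0^tk-1$ via the Rayleigh quotient at $x^t=A_{G_{p^t}}^{\hat T}e_{i_0}$, and the same bound on $\langle p^\star,r^t\rangle$ from~\eqref{eq:p-star-bound}. The only detail the paper states that you leave implicit is that all coordinates of $p^t$ and $p^{t+1}$ stay positive (since $w^1$ is all-ones and each factor satisfies $1-\frac{\epsilon}{n}r^t_m\ge 1/2$), which keeps the logarithms well defined.
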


\begin{proof}
    Fix an iteration $t$, and suppose that Algorithm~\ref{alg:decision-algorithm} does not terminate with output ``NO'' at this iteration. We show that inequality~\eqref{eq:lyapunov-decrease-bound} must hold. Recall that in Algorithm~\ref{alg:decision-algorithm}, we choose
    $$\epsilon = \frac{1}{2n}\frac{1}{n(n-1)(k-1) + 2}\leq \frac{1}{2}.$$
    Together with Lemma~\ref{lem:rho-bound}, this implies that $\frac{\epsilon}{n}r_m^t \leq \frac{1}{2}$ for every index $m\in \bar E \cup \{0\}$. Since the vector $w^1$ of initial weights is at the all-ones vector, it follows that all coordinates of $p^t$ and $p^{t+1}$ are positive, so all logarithms below are well-defined. 
    Using the definition of $L$ in~\eqref{eq:def-lyapunov}, we have
    \begin{align*}
    L(p^{t+1}) - L(p^t) &= \sum_{ij\in \bar{E}}p^\star_{ij}\ln\frac{p^\star_{ij}}{p_{ij}^{t+1}} + p^\star_0\ln\frac{p^\star_0}{p_0^{t+1}} - \sum_{ij\in \bar{E}}p^\star_{ij}\ln\frac{p^\star_{ij}}{p_{ij}^{t}} - p^\star_0\ln\frac{p^\star_0}{p_0^{t}} \\
    &= \sum_{ij\in \bar{E}}p^\star_{ij}\ln\frac{p_{ij}^{t}}{p_{ij}^{t+1}} + p^\star_0\ln\frac{p_0^{t}}{p_0^{t+1}} \\
    &\leq \sum_{ij\in \bar{E}}p^\star_{ij}\left(-\frac{\epsilon}{n}\langle p^t,r^t \rangle - \ln(1-\frac{\epsilon}{n}r^t_{ij})\right) + p^\star_0\left(-\frac{\epsilon}{n}\langle p^t,r^t \rangle - \ln(1-\frac{\epsilon}{n}r^t_0)\right) \\
    &= -\frac{\epsilon}{n}\langle p^t ,r^t \rangle - \sum_{ij\in \bar{E}}p^\star_{ij}\ln(1-\frac{\epsilon}{n}r^t_{ij}) - p^\star_0\ln(1-\frac{\epsilon}{n}r^t_0)\\
    &\leq -\frac{\epsilon}{n}\langle p^t - p^\star,r^t \rangle + \sum_{ij\in \bar{E}}p^\star_{ij}\left(\frac{\epsilon}{n}\right)^2 (r^t_{ij})^2+ p^\star_0 \left(\frac{\epsilon}{n}\right)^2 (r^t_0)^2\\
    &\leq -\frac{\epsilon}{n}\langle p^t - p^\star,r^t \rangle + \epsilon^2,
\end{align*}
where the first inequality follows from Lemma~\ref{lem:lyapunov-key-inequality}, the second inequality from the fact that for $x\leq 1/2$, $-\ln(1-x)\leq x+x^2$, and the last one from Lemma~\ref{lem:rho-bound}. We next bound $\langle p^t, r^t \rangle$ and $\langle p^\star, r^t \rangle$ separately. Recall that at iteration $t$ of Algorithm~\ref{alg:decision-algorithm}, Algorithm~\ref{alg:counting-path-alg} returns the walk count ratio $\beta^t$ and the walk count vector $x^t$, which in view of Lemma~\ref{lem:connection-between-walk-counting-and-matrix-multiplication} and their definitions satisfy
\begin{align*}
    \beta^t &= \max_{i\in S^t}\frac{\psi_{2\hat{T}+1,A_{G_{p^t}}}(i)}{\psi_{2\hat{T},A_{G_{p^t}}}(i)} \\
    &= \max_{i\in S^t}\frac{(A_{G_{p^t}}^{\hat{T}}e_i)\tran A_{G_{p^t}} (A_{G_{p^t}}^{\hat{T}}e_i)}{\|A_{G_{p^t}}^{\hat{T}}e_i\|_2^2}\\
&=\frac{(x^t)\tran (\sum_{ij\in \bar{E}} -p^t_{ij} Y_{ij} + p^t_0 J + I) x^t}{\|x^t\|_2^2},
\end{align*}
where $S^t = \left\{ i \in [n] \mid \psi_{2\hat T,A_{G_{p^t}}}(i) > 0 \right\}$. Then, by the definitions of $p^t$ and $r^t$ in~\eqref{eq:def-p-t-r-t}, we have
\begin{align*}
    \langle p^t, r^t \rangle &= \frac{(x^t)\tran (\sum_{ij\in \bar{E}} -p^t_{ij} Y_{ij} + p^t_0 (J - kI)) x^t}{\|x^t\|_2^2} =  \beta^t - p_0^tk - 1.
\end{align*}
Since the termination condition of Algorithm~\ref{alg:decision-algorithm} is not satisfied, this gives
\[ \langle p^t, r^t \rangle =  \beta^t - p_0^tk - 1 \geq -\hat{\epsilon} (p_0^t k+1).\]
By the definitions of $p^\star$ and $ r^t$, and by the Rayleigh quotient inequality, we have
\begin{align*}
    \langle p^\star, r^t \rangle &= \frac{(x^t)\tran(\sum_{ij\in \bar{E}} -p^\star_{ij} Y_{ij} + p^\star_0 (J - kI)) x^t}{ \|x^t\|_2^2}\leq \lambda_\max(\sum_{ij\in \bar{E}} -p^\star_{ij} Y_{ij} + p^\star_0 (J - kI)).
\end{align*}
Using the two inequalities above, we obtain
\begin{align*}
    L(p^{t+1}) - L(p^t) &\leq -\frac{\epsilon}{n}(\langle p^t, r^t \rangle - \langle p^\star, r^t \rangle) + \epsilon^2\\
    &\leq -\frac{\epsilon}{n}(-\hat{\epsilon}(p_0^t k + 1) - \lambda_\max(\sum_{ij\in \bar{E}} -p^\star_{ij} Y_{ij} + p^\star_0 (J - kI))) +\epsilon^2\\
    &\leq -\frac{\epsilon}{n}(-\hat{\epsilon}(k+1) + \frac{2}{(k-1)n(n-1) + 2}) +\epsilon^2\\
    &\leq -\frac{\epsilon}{n}\frac{1}{(k-1)n(n-1) + 2} +\epsilon^2\\
    &= -\frac{1}{4n^2((k-1)n(n-1) + 2)^2},
\end{align*}
where the third inequality follows from~\eqref{eq:p-star-bound} and the fact that $p_0^t \leq 1$, the fourth inequality from the fact that $\hat{\epsilon} \leq \frac{1}{(k+1)\pigl((k-1)n(n-1) + 2\pigr)}$, and the equation from our choice of $\epsilon = \frac{1}{2n}\frac{1}{n(n-1)(k-1) + 2}$.
\end{proof}


We are now ready to formalize the proof of correctness of Algorithm~\ref{alg:decision-algorithm}.
\begin{theorem}[Correctness of Algorithm~\ref{alg:decision-algorithm}]
    Given a perfect graph $G(V,E)$ with $|V| = n$ and an integer $k\in\{2,\dots, n\}$, Algorithm~\ref{alg:decision-algorithm} correctly decides if there is a clique of size $k$ in $G$.
\end{theorem}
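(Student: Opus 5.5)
The proof splits into two cases according to whether $G$ has a clique of size $k$. In each case we show that Algorithm~\ref{alg:decision-algorithm} returns the right answer.

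\emph{Case 1: $G$ contains a clique of size $k$.} The algorithm can output ``NO'' only through the test in line 9 at some iteration $t$, i.e., only if $\beta^t < (1-\hat{\epsilon})(p^t_0 k+1)$. Each $p^t$ is a probability vector in $\Delta_{|\bar E|+1}$, because the weights stay positive; this follows from Lemma~\ref{lem:rho-bound} and $\epsilon\le 1/2$, exactly as noted in the proof of Theorem~\ref{thm:lyapunov-decrease}. The implication (ii)$\Rightarrow$(i) of Theorem~\ref{thm:inexact-certificate-for-refuting-clique-of-size-k} then shows that this test would certify that $G$ has no $k$-clique, a contradiction. We must check that the parameters $\hat{\epsilon}$ and $\hat T$ chosen in the algorithm meet the hypotheses of the theorem; they do by construction, and in fact (ii)$\Rightarrow$(i) only needs $\hat\epsilon\in(0,1]$ and the bound on $\hat T$ through Lemma~\ref{lem:approximation-via-counting-paths}. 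Hence the test never fires, the loop runs all $T$ iterations, and the algorithm returns ``YES''.

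\emph{Case 2: $G$ has no clique of size $k$.} Suppose, for contradiction, that the test never fires during the $T$ iterations. Theorem~\ref{thm:lyapunov-decrease} then gives, at every iteration $t=1,\dots,T$,
\[
L(p^{t+1})-L(p^t)\le -\delta, \qquad \delta=\frac{1}{4n^2((k-1)n(n-1)+2)^2}.
\]
Summing over the $T$ iterations gives $L(p^{T+1})\le L(p^1)-T\delta$. The Lyapunov function is a KL divergence between probability vectors, so $L\ge 0$ (Gibbs' inequality). It remains to bound $L(p^1)$. Since $p^1$ is uniform on $|\bar E|+1$ coordinates, we have $\ln(p^\star_m/p^1_m)=\ln p^\star_m+\ln(|\bar E|+1)$, so
\[
L(p^1)=\ln(|\bar E|+1)-H(p^\star)\le \ln(|\bar E|+1),
\]
where $H(p^\star)\ge 0$ is the entropy of $p^\star$ and terms with $p^\star_m=0$ contribute zero. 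Combining these facts gives $0\le \ln(|\bar E|+1)-T\delta$. This contradicts the choice $T>4\ln(|\bar E|+1)n^2((k-1)n(n-1)+2)^2=\ln(|\bar E|+1)/\delta$. Hence the test fires at some iteration $t\le T$, and the algorithm returns ``NO''.

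The main points needing care are bookkeeping rather than new ideas. First, Theorem~\ref{thm:lyapunov-decrease} may only be applied at iterations that are actually executed, which is why the contradiction is set up as above. Second, the KL sum must be handled correctly on coordinates where $p^\star_m=0$. Third, one must verify the positivity of all $p^t_m$, which makes $L(p^t)$ finite. The only step requiring a short computation is the initial bound $L(p^1)\le\ln(|\bar E|+1)$. Polynomial running time is already noted after the algorithm.
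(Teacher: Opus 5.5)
Your proof is correct and is essentially the paper's argument. The clique case goes through the (ii)$\Rightarrow$(i) direction of Theorem~\ref{thm:inexact-certificate-for-refuting-clique-of-size-k}, and the no-clique case combines Theorem~\ref{thm:lyapunov-decrease} with $0\le L(p^{T+1})\le L(p^1)-T\delta$ and $L(p^1)\le\ln(|\bar E|+1)$ to contradict the choice of $T$; the extra bookkeeping you add (positivity of the weights, the $0\ln 0$ convention, checking the ranges of $\hat\epsilon$ and $\hat T$) is left implicit in the paper and is correct.
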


\begin{proof}
    We split the proof into two cases.
    
    \textbf{Case 1:} $G$ has no clique of size $k$. First recall that $p^1\in\Delta_{|\bar E|+1}$ is the uniform distribution. Thus, we can bound the initial value of the Lyapunov function defined in~\eqref{eq:def-lyapunov} as
$$L(p^1) = \ln(|\bar{E}|+1) + \sum_{ij\in \bar{E}} p^\star_{ij} \ln(p^\star_{ij}) + p_0^\star \ln(p_0^\star)\leq \ln(|\bar{E}|+1).$$
    In each iteration $t$ of Algorithm~\ref{alg:decision-algorithm}, either the termination condition is satisfied and the algorithm correctly outputs ``NO'', or, by Theorem~\ref{thm:lyapunov-decrease}, the Lyapunov function $L$ satisfies
    $$L(p^{t+1}) - L(p^t) \leq -\frac{1}{4n^2((k-1)n(n-1) + 2)^2}.$$
    Suppose for the sake of contradiction that Algorithm~\ref{alg:decision-algorithm} does not return ``NO'' in the first $T$ iterations. Then, 
    \begin{equation}\label{eq:lyapunov-T-1}
        L(p^{T+1}) \leq L(p^1) - \frac{T}{4n^2((k-1)n(n-1)+2)^2}. 
    \end{equation}
    By our choice of \(T\) we have
    \[ \frac{T}{4n^2((k-1)n(n-1)+2)^2}>\ln(|\bar E|+1). \]
    Recalling that $L(p^1)\leq \ln(|\bar E|+1)$, we observe that~\eqref{eq:lyapunov-T-1} implies $L(p^{T+1})<0$. This contradicts the fact that the KL divergence is nonnegative. Therefore, Algorithm~\ref{alg:decision-algorithm} returns ``NO'' within the first $T$ iterations.

    \textbf{Case 2:} There is a clique of size $k$ in $G$.
    Suppose for the sake of contradiction that Algorithm~\ref{alg:decision-algorithm} returns ``NO'' at iteration $t$. This can only happen if the termination condition is satisfied; i.e.,
    $$\beta^t < (1-\hat{\epsilon})(p_0^t k + 1).$$
    By Theorem~\ref{thm:inexact-certificate-for-refuting-clique-of-size-k}, this implies that there is no clique of size $k$ in $G$, contradicting the assumption. Thus, Algorithm~\ref{alg:decision-algorithm} must necessarily return ``YES''.
\end{proof}

\section{Connections with other optimization algorithms}\label{sec:connection}


In a seminal paper~\cite{lovasz1979shannon}, \lovasz introduced the \emph{theta number} $\vartheta(G)$ of a graph $G(V,E)$ as the optimal value of the following semidefinite program (SDP):
    \begin{equation*}\label{opt:lovasz-sdp}
        \begin{aligned}
\vartheta(G) \coloneqq \max_{X \in \symm}\quad & \tr(JX) \\
\text{subject to}\quad 
    & \tr(X) = 1,\\
    & X \succeq 0,\\
    & X_{ij} = 0,\quad \text{for }\,ij\in E.
\end{aligned}
    \end{equation*}
He further showed that for any graph $G$, the theta number of the complement graph $\bar{G}$ satisfies the following inequalities:
$$\omega(G)\leq \vartheta(\bar{G})\leq \chi(G).$$
It follows that for perfect graphs, the above three parameters coincide. Therefore, \emph{any} algorithm that computes the clique number of a perfect graph is, in effect, also solving the above semidefinite program on the complement graph. {In this section, we argue that Algorithm~\ref{alg:decision-algorithm} can be viewed as a combinatorial implementation of a multiplicative weights update algorithm for solving a modified version of the \lovasz SDP, or a combinatorial implementation of the exponentiated gradient descent algorithm applied to the dual of this SDP. In fact, these algorithms were our starting point for the design of Algorithm~\ref{alg:decision-algorithm}. The modified version of the \lovasz SDP that is of interest to us is due to Szegedy~\cite{szegedy1994mod-lovasz} and is formulated as follows:}
\begin{equation}\label{opt:mod-lovasz-sdp}
        \begin{aligned}
\hat{\vartheta}(G) \coloneqq \max_{X \in \symm}\quad & \tr(JX) \\
\text{subject to}\quad 
    & \tr(X) = 1,\\
    & X \succeq 0,\\
    & X_{ij} \leq 0,\quad \text{for }\,ij\in E.
\end{aligned}
    \end{equation}
It has been shown in~\cite{dukanovic2007semidefinite} that $\hat{\vartheta}(\bar{G})$ satisfies the same inequalities as the theta number:
$$\omega(G)\leq \hat{\vartheta}(\bar{G})  \leq \chi(G).$$

The multiplicative weights update method is a general algorithmic framework with many applications; see~\cite{arora2012mwu} and the references therein. In particular, it can be used to approximately solve semidefinite programming feasibility problems.
The algorithm takes an SDP feasibility problem and a quantity called the SDP ``width'' as inputs. For a specified accuracy parameter $\delta > 0$, it either outputs that the SDP is infeasible, or returns a $\delta$-feasible point; i.e., a positive semidefinite matrix that violates every linear constraint by at most $\delta$. In the latter case, the SDP may still be infeasible. The algorithm maintains a set of weights associated with the linear constraints, which are updated in each iteration according to an exponential update rule. Moreover, in each iteration, the algorithm calls an oracle that computes the largest eigenvalue and a corresponding eigenvector of a certain symmetric matrix.



In Algorithm~\ref{alg:decision-algorithm}, we replace the exponential update rule by its first-order Taylor approximation. We also approximate the eigenvalue/eigenvector oracle by the \emph{power method}, which gives rise to the connection with walk counting. We show that the ``width'' parameter of the SDP in~\eqref{opt:mod-lovasz-sdp} can be taken to be $n$. Finally, we prove that despite the approximation errors made by the power method, the Taylor expansion, and the multiplicative weights update algorithm itself, we can still correctly decide if there is a clique of size $k$ in a perfect graph.
}

{Algorithm~\ref{alg:decision-algorithm} can also be viewed as a combinatorial implementation of the exponentiated gradient descent algorithm applied to the dual of the modified \lovasz SDP. 
By taking the dual of the SDP in~\eqref{opt:mod-lovasz-sdp} with $\bar{G}$ as input, one can invoke SDP strong duality to show that a perfect graph $G(V,E)$ has no clique of size $k$ if and only if the following optimization problem has a negative optimal value:
\begin{equation}\label{opt:dual-certificate}
\begin{aligned}
    \min_{p\in \Delta_{|\bar{E}| + 1}}\quad& \lambda_\max\left(p_0(J-kI) - \sum_{ij\in \bar{E}} p_{ij} Y_{ij}\right).
\end{aligned}
\end{equation}

The exponentiated gradient descent algorithm is a method for minimizing convex functions over the probability simplex, which is precisely the form of problem~\eqref{opt:dual-certificate}. This algorithm can be seen as a specific instance of the so-called mirror descent method with negative entropy serving as the ``mirror map''; see~\cite{nemirovski1983problem,bubeck2015convex} for details. The algorithm has an exponential update rule, a step size, and needs access to subgradients of the objective function. Algorithm~\ref{alg:decision-algorithm} can be viewed as a combinatorial implementation of this algorithm where the step size is taken to be $\epsilon/n$, the exponential update rule is Taylor expanded to first order, and the computation of the subgradient is approximated by the power method.}




{We note that the fact that Algorithm~\ref{alg:decision-algorithm} has the above connections to approximate versions of first-order methods for semidefinite programming does not imply that it is non-combinatorial. Indeed, as we have shown in Sections~\ref{sec:certificate-for-refuting-clique-of-size-k}--\ref{sec:alg-analysis}, both the algorithm and its proof of correctness can be fully understood without any knowledge of semidefinite programming.}

\section*{Acknowledgments.} The authors are grateful to Maria Chudnovsky and Cemil Dibek for their contributions to this project and many insightful discussions. We also thank Paul Seymour and Noga Alon for their invaluable feedback.




\section{Appendix}

\subsection{Proof of Lemma~\ref{lem:approximation-via-counting-paths}}
\label{app:proof:approximation-via-counting-paths}

\begin{proof}
We first show that \(S\neq\varnothing\). Let the eigenvalues of \(A_G\) be
\(\lambda_1\ge \lambda_2\ge \cdots \ge \lambda_n\ge 0\). Since \(A_G\) is nonzero and positive semidefinite, \(\lambda_1=\lambda_{\max}(A_G)>0\). {By Lemma~\ref{lem:connection-between-walk-counting-and-matrix-multiplication}, we have}
\[
\sum_{i=1}^n \psi_{2\hat T,A_G}(i)
=
\sum_{i=1}^n (A_G^{2\hat T})_{ii}
=
\tr(A_G^{2\hat T})
=
\sum_{i=1}^n \lambda_i^{2\hat T}
>0.
\]
Therefore, there exists some \(i\in[n]\) such that \(\psi_{2\hat T,A_G}(i)>0\).

Let us prove the desired the upper bound. Fixing \(i\in S\), {Lemma~\ref{lem:connection-between-walk-counting-and-matrix-multiplication} and} the Rayleigh quotient inequality give
\[
\frac{\psi_{2\hat T+1,A_G}(i)}{\psi_{2\hat T,A_G}(i)}
=
\frac{(A_G^{2\hat T+1})_{ii}}{(A_G^{2\hat T})_{ii}}
=
\frac{(A_G^{\hat T}e_i)^\top A_G(A_G^{\hat T}e_i)}{(A_G^{\hat T}e_i)^\top (A_G^{\hat T}e_i)}
\le \lambda_{\max}(A_G).
\]
We next prove the desired lower bound. If \(i\notin S\), {then \(
0=\psi_{2\hat T,A_G}(i) = (A_G^{2\hat T})_{ii}=(A_G^{\hat T}e_i)^\top (A_G^{\hat T}e_i),
\)}
which implies \(A_G^{\hat T}e_i=0\). Therefore,
\(
\psi_{2\hat T+1,A_G}(i)=(A_G^{2\hat T+1})_{ii}=(A_G^{\hat T}e_i)^\top A_G(A_G^{\hat T}e_i)=0.
\)
Therefore, we have
\[
\max_{i\in S}\frac{\psi_{2\hat T+1,A_G}(i)}{\psi_{2\hat T,A_G}(i)}
\ge
\frac{\sum_{i\in S}\psi_{2\hat T,A_G}(i)\cdot \frac{\psi_{2\hat T+1,A_G}(i)}{\psi_{2\hat T,A_G}(i)}}{\sum_{i\in S}\psi_{2\hat T,A_G}(i)}
=
\frac{\sum_{i=1}^n \psi_{2\hat T+1,A_G}(i)}{\sum_{i=1}^n \psi_{2\hat T,A_G}(i)}
=
\frac{\sum_{i=1}^n \lambda_i^{2\hat T+1}}{\sum_{i=1}^n \lambda_i^{2\hat T}}.
\]
To bound the last quotient, let
\[
\ell\coloneqq \left|\left\{i\in[n]:\lambda_i>\lambda_1\!\left(1-\frac{\epsilon}{2}\right)\right\}\right|.
\]
We can lower bound the numerator as
\[
\sum_{i=1}^n \lambda_i^{2\hat T+1}\ge \sum_{i=1}^{\ell}\lambda_i^{2\hat T+1}
\ge \lambda_1\!\left(1-\frac{\epsilon}{2}\right)\sum_{i=1}^{\ell}\lambda_i^{2\hat T}.
\]
Since \(\lambda_i\le \lambda_1(1-\epsilon/2)\) for \(i>\ell\), we can upper bound the denominator as
\begin{align*}
    \sum_{i=1}^n \lambda_i^{2\hat T}
&=
\sum_{i=1}^{\ell}\lambda_i^{2\hat T}+\sum_{i=\ell+1}^n \lambda_i^{2\hat T}\\
&\le
\sum_{i=1}^{\ell}\lambda_i^{2\hat T}+(n-\ell)\lambda_1^{2\hat T}\left(1-\frac{\epsilon}{2}\right)^{2\hat T}\\
&\le \left(1+n\left(1-\frac{\epsilon}{2}\right)^{2\hat T}\right)\sum_{i=1}^{\ell}\lambda_i^{2\hat T}.
\end{align*}
{Using the two bounds above, we have
\begin{equation}
    \begin{aligned}
            \max_{i\in S}\frac{\psi_{2\hat T+1,A_G}(i)}{\psi_{2\hat T,A_G}(i)}
&\ge
\frac{1-\frac{\epsilon}{2}}{1+n(1-\frac{\epsilon}{2})^{2\hat T}}\lambda_1\\
&\ge \frac{1-\frac{\epsilon}{2}}{1+\frac{\epsilon}{2}}\lambda_1\\
&\ge (1-\varepsilon) \lambda_1\\
&= (1-\epsilon)\lambda_{\max}(A_G),
    \end{aligned}
\end{equation}
where the second inequality follows from the fact that $\hat{T} \ge \frac{1}{\epsilon}\ln(\frac{2n}{\epsilon})$ and \(\ln(1-\frac{\epsilon}{2})\le -\frac{\epsilon}{2}\), and the third from the fact that for \(x\in[0,1]\), \(\frac{1-x}{1+x}\ge 1-2x\).}

\end{proof}

\subsection{Proof of Lemma~\ref{lem:lower-bound-eigenvalue}}\label{app:proof:lower-bound-eigenvalue}
\begin{proof}
    We first prove that for any $B,C\in \symm$, the following inequality holds:
    $$\lambda_\min(B+C) \geq \lambda_\min(B) + \lambda_\min(C).$$
   To see this, let $x$ be a unit eigenvector corresponding to $\lambda_{\mathrm{min}}(B+C)$. By the definition of the Rayleigh quotient, we have
    \[
    \lambda_\min(B+C) = x\tran (B+C)x = x\tran Bx + x\tran Cx \geq \lambda_\min(B) + \lambda_\min(C).
    \]
    Using this property, we lower bound $\lambda_{\mathrm{min}}(A_{G_p})$ as follows:
    \begin{align*}
        \lambda_{\mathrm{min}}(A_{G_p}) &\geq \lambda_{\mathrm{min}}\left(\sum_{ij\in \bar{E}}-p_{ij}Y_{ij}\right) + \lambda_{\mathrm{min}}(p_0 J) + 1 \\
        &\geq \lambda_{\mathrm{min}}\left(\sum_{ij\in \bar{E}}-p_{ij}Y_{ij}\right) + 1 \\
        &\geq 0,
    \end{align*}
    where the first inequality follows from the claim above. The second inequality holds because $p_0 \geq 0$ and $J \succeq 0$. The final inequality follows from the Gershgorin circle theorem. Since the diagonal entries of $A_{G_p}$ are nonzero, $A_{G_p}$ is nonzero. This concludes the proof.
\end{proof}

\bibliographystyle{unsrt}
\bibliography{refs}

@article{abrishami2025submodular,
  title={Submodular functions and perfect graphs},
  author={Abrishami, Tara and Chudnovsky, Maria and Dibek, Cemil and Vu{\v{s}}kovi{\'c}, Kristina},
  journal={Mathematics of Operations Research},
  volume={50},
  number={1},
  pages={189--208},
  year={2025},
  publisher={INFORMS}
}

@incollection{Grotschel1994poly-alg-perfect-graphs,
  title={Polynomial algorithms for perfect graphs},
  author={Gr{\"o}tschel, Martin and Lov{\'a}sz, L{\'a}szl{\'o} and Schrijver, Alexander},
  booktitle={North-Holland mathematics studies},
  volume={88},
  pages={325--356},
  year={1984},
  publisher={Elsevier}
}

@article{trotignon2013perfect-graphs-survey,
  title={Perfect graphs: a survey},
  author={Trotignon, Nicolas},
  journal={arXiv preprint arXiv:1301.5149},
  year={2013}
}

@book{grotschel2012geometric-alg-combo-opt,
  title={Geometric Algorithms and Combinatorial Optimization},
  author={Gr{\"o}tschel, Martin and Lov{\'a}sz, L{\'a}szl{\'o} and Schrijver, Alexander},
  volume={2},
  year={2012},
  publisher={Springer Science \& Business Media}
}

@article{laurent2012semidefinite-lecture-notes,
  title={Semidefinite optimization},
  author={Laurent, Monique and Vallentin, Frank},
  journal={Lecture Notes, available at \small{https://www.mi.uni-koeln.de/opt/wp-content/uploads/2015/10/laurent_vallentin_sdo_2012_05.pdf}},
  year={2012}
}

@inproceedings{eisenbrand2003combo-max-stable-t-perfect,
  title={A combinatorial algorithm for computing a maximum independent set in a t-perfect graph},
  author={Eisenbrand, Friedrich and Funke, Stefan and Garg, Naveen and K{\"o}nemann, Jochen},
  booktitle={Proceedings of the Fourteenth Annual ACM-SIAM Symposium on Discrete Algorithms},
  volume={12},
  number={14},
  pages={517--522},
  year={2003}
}

@inproceedings{szegedy1994mod-lovasz,
  title={A note on the $\theta$ number of {L}ov\a' asz and the generalized {D}elsarte bound},
  author={Szegedy, Mario},
  booktitle={Proceedings 35th Annual Symposium on Foundations of Computer Science},
  pages={36--39},
  year={1994},
  organization={IEEE}
}

@article{arora2012mwu,
  title={The multiplicative weights update method: a meta-algorithm and applications},
  author={Arora, Sanjeev and Hazan, Elad and Kale, Satyen},
  journal={Theory of Computing},
  volume={8},
  number={1},
  pages={121--164},
  year={2012},
  publisher={Theory of Computing Exchange}
}

@book{nemirovski1983problem,
  title={Problem Complexity and Method Efficiency in Optimization},
  author={Nemirovski, Arkadi Semenovi{\v{c}} and Yudin, David Borisovich},
  year={1983},
  publisher={John Wiley \& Sons},
  series    = {Wiley-Interscience Series in Discrete Mathematics},
}

@article{chudnovsky2015coloring,
  title={Coloring perfect graphs with no balanced skew-partitions},
  author={Chudnovsky, Maria and Trotignon, Nicolas and Trunck, Th{\'e}ophile and Vu{\v{s}}kovi{\'c}, Kristina},
  journal={Journal of Combinatorial Theory, Series B},
  volume={115},
  pages={26--65},
  year={2015},
  publisher={Elsevier}
}

@article{schrijver2000combinatorial,
  title={A combinatorial algorithm minimizing submodular functions in strongly polynomial time},
  author={Schrijver, Alexander},
  journal={Journal of Combinatorial Theory, Series B},
  volume={80},
  number={2},
  pages={346--355},
  year={2000},
  publisher={Elsevier}
}

@misc{lovasz2011interview,
  author       = {Wigderson,Avi and Lov{\'a}sz, L{\'a}szl{\'o}},
  title        = {Science {L}ives: L{\'a}szl{\'o} {L}ov{\'a}sz},
  howpublished = {Interview by Avi Wigderson \href{[Video]}{https://www.youtube.com/watch?v=ikOUHnrNaxA}. YouTube (uploaded by Simons Foundation)},
  month        = apr,
  day          = {20},
  year         = {2015},
  url          = {https://www.youtube.com/watch?v=ikOUHnrNaxA},
  note         = {Accessed: 2025-08-24}
}

@article{berge1961farbung,
  title={F\"{a}rbung von {G}raphen, deren s\"{a}mtliche bzw. deren ungerade {K}reise starr sind},
  author={Berge, Claude},
  journal={Wiss. Z. Martin-Luther-Univ. Halle-Wittenberg Math.-Natur. Reihe},
  volume = {10},
  pages = {114},
  year={1961}
}

@article{chudnovsky2006strong,
  title={The strong perfect graph theorem},
  author={Chudnovsky, Maria and Robertson, Neil and Seymour, Paul and Thomas, Robin},
  journal={Annals of Mathematics},
  pages={51--229},
  year={2006},
  publisher={JSTOR}
}

@article{Mansour2021ChudnovskyInterview,
  author    = {Toufik Mansour},
  title     = {Interview with {M}aria {C}hudnovsky},
  journal   = {Enumerative Combinatorics and Applications},
  volume    = {1},
  number    = {2},
  year      = {2021},
  note      = {Interview \#S3I4},
  url       = {https://ecajournal.haifa.ac.il/Volume2021/ECA2021_S3I4.pdf}
}

@article{dukanovic2007semidefinite,
  title={Semidefinite programming relaxations for graph coloring and maximal clique problems},
  author={Dukanovic, Igor and Rendl, Franz},
  journal={Mathematical Programming},
  volume={109},
  number={2},
  pages={345--365},
  year={2007},
  publisher={Springer}
}

@article{chudnovsky2020maximum,
  title={On the maximum weight independent set problem in graphs without induced cycles of length at least five},
  author={Chudnovsky, Maria and Pilipczuk, Marcin and Pilipczuk, Micha{\l} and Thomass{\'e}, St{\'e}phan},
  journal={SIAM Journal on Discrete Mathematics},
  volume={34},
  number={2},
  pages={1472--1483},
  year={2020},
  publisher={SIAM}
}

@article{IWATA2002203,
title = {A Fully Combinatorial Algorithm for Submodular Function Minimization},
journal = {Journal of Combinatorial Theory, Series B},
volume = {84},
number = {2},
pages = {203-212},
year = {2002},
issn = {0095-8956},
doi = {https://doi.org/10.1006/jctb.2001.2072},
url = {https://www.sciencedirect.com/science/article/pii/S0095895601920726},
author = {Satoru Iwata}
}

@article{gavril1972algorithms,
  title={Algorithms for minimum coloring, maximum clique, minimum covering by cliques, and maximum independent set of a chordal graph},
  author={Gavril, F{\u{a}}nic{\u{a}}},
  journal={SIAM Journal on Computing},
  volume={1},
  number={2},
  pages={180--187},
  year={1972},
  publisher={SIAM}
}

@article{olariu1991optimal,
  title={An optimal greedy heuristic to color interval graphs},
  author={Olariu, Stephan},
  journal={Information Processing Letters},
  volume={37},
  number={1},
  pages={21--25},
  year={1991},
  publisher={Elsevier}
}

@article{penev2012coloring,
  title={Coloring bull-free perfect graphs},
  author={Penev, Irena},
  journal={SIAM Journal on Discrete Mathematics},
  volume={26},
  number={3},
  pages={1281--1309},
  year={2012},
  publisher={SIAM}
}

@incollection{hsu1981color,
  title={How to color claw-free perfect graphs},
  author={Hsu, Wen-Lian},
  booktitle={North-Holland Mathematics Studies},
  volume={59},
  pages={189--197},
  year={1981},
  publisher={Elsevier}
}

@article{chudnovsky2005recognizing,
  title={Recognizing {B}erge Graphs},
  author={Chudnovsky, Maria and Cornu{\'e}jols, G{\'e}rard and Liu, Xinming and Seymour, Paul and Vu{\v{s}}kovi{\'c}, Kristina},
  journal={Combinatorica},
  volume={25},
  number={2},
  pages={143--186},
  year={2005},
  publisher={Springer-Verlag Berlin, Heidelberg}
}

@incollection{karp2009reducibility,
  title={Reducibility among combinatorial problems},
  author={Karp, Richard M.},
  booktitle={50 Years of Integer Programming 1958-2008: from the Early Years to the State-of-the-Art},
  pages={219--241},
  year={2009},
  publisher={Springer}
}

@article{lovasz1979shannon,
  title={On the {S}hannon capacity of a graph},
  author={Lov{\'a}sz, L{\'a}szl{\'o}},
  journal={IEEE Transactions on Information Theory},
  volume={25},
  number={1},
  pages={1--7},
  year={1979},
  publisher={IEEE}
}

@article{lovasz-substitute,
  title={Normal hypergraphs and the perfect graph conjecture},
  author={Lov{\'a}sz, L{\'a}szl{\'o}},
  journal={{D}iscrete Mathematics},
  volume={2},
  number={3},
  pages={253--267},
  year={1972},
  publisher={Elsevier}
}

@article{freund1997decision,
  title={A decision-theoretic generalization of on-line learning and an application to boosting},
  author={Freund, Yoav and Schapire, Robert E.},
  journal={Journal of Computer and System Sciences},
  volume={55},
  number={1},
  pages={119--139},
  year={1997},
  publisher={Elsevier}
}

@article{bubeck2015convex,
  title={Convex {O}ptimization: Algorithms and {C}omplexity},
  author={Bubeck, S{\'e}bastien},
  journal={Foundations and {T}rends in {M}achine {L}earning},
  volume={8},
  number={3-4},
  pages={231--357},
  year={2015},
  publisher={Emerald Publishing limited}
}

\end{document}